\icmltitlerunning{Global Rhythm Style Transfer Without Text Transcriptions}
\def \algname {\textsc{AutoPST}}
\newcommand{\e}[1]{{\small $#1$}}
\newtheorem{theorem}{Theorem}
\begin{document}

\twocolumn[
\icmltitle{Global Rhythm Style Transfer Without Text Transcriptions}




\icmlsetsymbol{equal}{*}

\begin{icmlauthorlist}
\icmlauthor{Kaizhi Qian}{equal,ibm,ibm2}
\icmlauthor{Yang Zhang}{equal,ibm,ibm2}
\icmlauthor{Shiyu Chang}{ibm,ibm2}
\icmlauthor{Jinjun Xiong}{ibm2}
\icmlauthor{Chuang Gan}{ibm,ibm2}
\icmlauthor{David Cox}{ibm,ibm2}
\icmlauthor{Mark Hasegawa-Johnson}{uiuc}
\end{icmlauthorlist}

\icmlaffiliation{uiuc}{University of Illinois at Urbana-Champaign, USA}
\icmlaffiliation{ibm}{MIT-IBM Watson AI Lab, USA}
\icmlaffiliation{ibm2}{IBM Thomas J. Watson Research Center, USA}

\icmlcorrespondingauthor{Yang Zhang}{yang.zhang2@ibm.com}
\icmlcorrespondingauthor{Kaizhi Qian}{kaizhiqian@gmail.com }

\icmlkeywords{Machine Learning, ICML}

\vskip 0.3in
]



\printAffiliationsAndNotice{\icmlEqualContribution} 

\begin{abstract}

Prosody plays an important role in characterizing the style of a speaker or an emotion, but most non-parallel voice or emotion style transfer algorithms do not convert any prosody information. Two major components of prosody are pitch and rhythm.  Disentangling the prosody information, particularly the rhythm component, from the speech is challenging because it involves breaking the synchrony between the input speech and the disentangled speech representation.  As a result, most existing prosody style transfer algorithms would need to rely on some form of text transcriptions to identify the content information, which confines their application to high-resource languages only. Recently, \textsc{SpeechSplit} \cite{qian2020unsupervised} has made sizeable progress towards unsupervised prosody style transfer, but it is unable to extract high-level global prosody style in an unsupervised manner.  In this paper, we propose \algname, which can disentangle global prosody style from speech without relying on any text transcriptions.  \algname\ is an Autoencoder-based Prosody Style Transfer framework with a thorough rhythm removal module guided by self-expressive representation learning. Experiments on different style transfer tasks show that \algname\ can effectively convert prosody that correctly reflects the styles of the target domains.

\end{abstract}



\section{Introduction}
\label{sec:intro}

Speech contains many layers of information. Besides the speech content, which can roughly be transcribed to text for many languages, prosody also conveys rich information about the personal, conversational, and world context within which the speaker expresses the content. There are two major constituents of prosody. The first is \emph{rhythm}, which summarizes the sequence of phone durations, and expresses phrasing, speech rate, pausing, and some aspects of prominence. The second is \emph{pitch}, which reflects 
intonation.

Recently, non-parallel speech style transfer tasks have achieved rapid progress thanks to the advancement of non-parallel deep style transfer algorithms. Speech style transfer refers to the tasks of transferring the source speech into the style of the target domain, while keeping the content unchanged. For example, in voice style transfer, the domains correspond to the speaker identities. In emotion style transfer, the domains correspond to the emotion categories. In both of these tasks, prosody is supposed to be an important part of the domain style --- different speakers or emotions have distinctive prosody patterns. However, few of the state-of-the-art algorithms in these two applications can convert the prosody aspect at all. Typically, the converted speech would almost always maintain the same pace and pitch contour shape as the source speech, even if the target speaker or emotion has a completely different prosody style.

The fundamental cause of not converting prosody is that disentangling the prosody information, particularly the rhythm aspect, is very challenging. Since the rhythm information corresponds to how long the speaker utters each phoneme, deriving a speech representation with the rhythm information removed implies breaking the temporal synchrony between the speech utterance and the representation, which has been shown difficult \citep{watanabe2017hybrid, kim2017joint} even for supervised tasks (\emph{e.g.} automatic speech recognition (ASR)), using state-of-the-art asynchronous sequence-to-sequence architectures such as Transformers \citep{vaswani2017attention}.

Due to this challenge, most existing prosody style transfer algorithms are forced to use text transcriptions to identify the content information in speech, and thereby separate out the remaining information as style \cite{biadsy2019parrotron}. However, such methods are language-dependent and cannot be applied to low-resource languages with few text transcriptions. Although there are some sporadic attempts to disentangle prosody in an unsupervised manner \cite{polyak2019attention}, their performance is limited. These algorithms typically consist of an auto-encoder pipeline with a resampling module at the input to corrupt the rhythm information. However, the corruption is often so mild that most rhythm information can still get through. \textsc{SpeechSplit} \cite{qian2020unsupervised} can achieve a more thorough prosody disentanglement, but it needs access to a set of ground-truth fine-grained local prosody information in the target domain, such as the exact timing of each phone and exact pitch contour, which is unavailable in the aforementioned style transfer tasks that can only provide high-level global prosody information. In short, global prosody style transfer without relying on text transcriptions or local prosody ground truth largely remains unresolved in the research community.

Motivated by this, we propose \algname, an unsupervised speech decomposition algorithm that 1) does not require text annotations, and 2) can effectively convert prosody style given domain summaries (\emph{e.g.} speaker identities and emotion categories) that only provide high-level global information. As its name indicates, \algname\ is an Autoencoder-based Prosody Style Transfer framework. \algname\ introduces a much more thorough rhythm removal module guided by the self-expressive representation learning proposed by \citet{Bhati2020}, and adopts a two-stage training strategy to guarantee passing full content information without leaking rhythm.  Experiments on different style transfer tasks show that \algname\ can effectively convert prosody that correctly reflects the styles of the target domains.

\section{Related Work}


\textbf{Prosody Disentanglement}\quad Several prosody disentanglement techniques are found in expressive text-to-speech (TTS) systems.
~\citet{skerry2018towards} introduced a Tacotron based speech synthesizer that can disentangle prosody from speech content by an auto-encoder based representation. \citet{wang2018style} further extracts global styles by quantization. Mellotron \citep{valle2020mellotron} is a speech synthesizer that captures and disentangles different aspects of the prosody information. CHiVE \citep{kenter2019chive} explicitly extracts and utilize prosodic features and linguistic features for expressive TTS.
However, these TTS systems all require text transcriptions, which, as discussed, makes the task easier but limits their applications to high-resource language.
Besides TTS systems, Parrotron \cite{biadsy2019parrotron} disentangles prosody by encouraging the latent codes to be the same as the corresponding phone representation of the input speech. \citet{liu2020towards} proposed to disentangle phoneme repetitions by vector quantization. However, these systems still require text transcriptions. \citet{polyak2019attention} proposed a prosody disentanglement algorithm that does not rely on text transcriptions, which attempts to remove the rhythm information by randomly resampling the input speech. However, the effect of their prosody conversion is not very pronounced. \textsc{SpeechSplit} \citep{qian2020unsupervised} can disentangle prosody with better performance, but it relies on fine-grained prosody ground-truth in the target domain.

\textbf{Voice Style Transfer}\quad Many style transfer approaches have been proposed for voice conversion. VAE-VC \cite{hsu2016voice} and VAE-GAN \cite{hsu2017voice} directly learns speaker-independent content representations using a VAE. ACVAE-VC \cite{kameoka2018acvae} encourages the converted speech to be correctly classified as the target speaker by classifying the output. In contrast, \citet{chou2018multi} discourages the latent code from being correctly classified as the source speaker by classifying the latent code. Inspired by image style transfer, \citet{gao2018voice} and \citet{kameoka2018stargan} adapted CycleGan \citep{kaneko2017parallel} and StarGan \citep{choi2018stargan} respectively for voice conversion. Later, CDVAE-VC was extended by directly applying GAN \cite{huang2020unsupervised} to improve the degree of disentanglement. \citet{chou2019one} uses instance normalization to further disentangle timbre from content. StarGan-VC2 \cite{kaneko2019stargan} refines the adversarial network by conditioning the generator and discriminator on both the source and the target speaker. \textsc{AutoVC} \cite{qian2019autovc} disentangles the timbre and content by tuning the information-constraining bottleneck of a simple autoencoder. Later, \citet{qian2020f0} fixed the pitch jump problem of \textsc{AutoVC} by F0 conditioning. Besides, the time-domain deep generative model is also gaining popularity \cite{niwa2018statistical,nachmani2019unsupervised,serra2019blow}. However, these methods only focus on converting timbre, which is only one of the speech components.

\textbf{Emotion Style Transfer}\quad Most existing emotion style transfer algorithms disentangle prosody information using parallel data. Early methods \cite{tao2006prosody, wu2009hierarchical} use classification and regression tree to disentangle prosody. Later, statistical methods, such as GMM \cite{aihara2012gmm} and HMM \cite{inanoglu2009data}, and deep learning methods \cite{luo2016emotional, ming2016deep} are applied. However, these approaches use parallel data. Recently, non-parallel style transfer methods \cite{zhou2020transforming, zhou2020vaw} are applied to emotion style transfer to disentangle prosody. However, these methods are unable to explicitly disentangle rhythm information.

\section{The Challenges of Disentangling Rhythm}
In this section, we will provide some background information on why disentangling rhythm has been difficult.

\begin{figure}[t]
\centering
\includegraphics[width=0.9\linewidth]{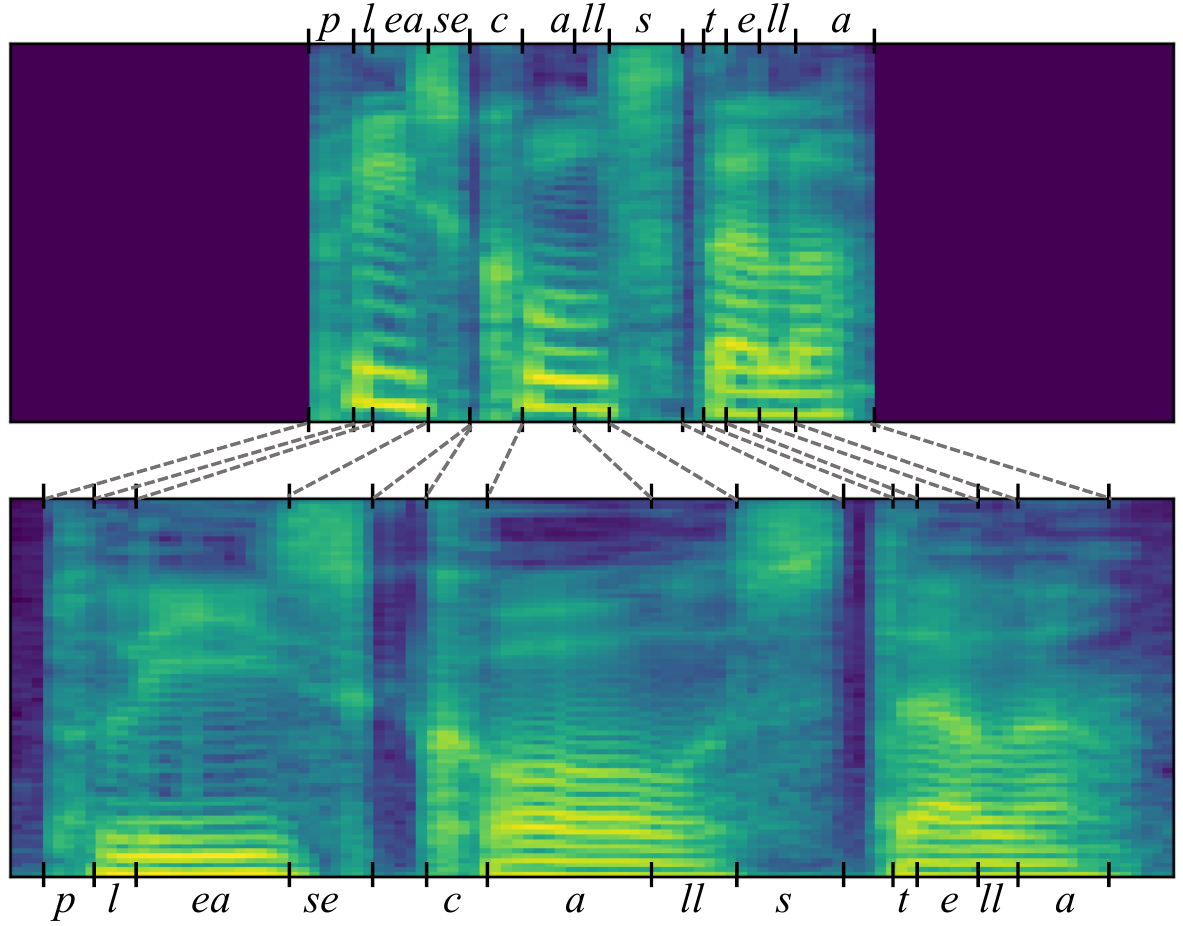}
\caption{Mel-spectrograms of two utterances of \emph{``Please call Stella''} with different speech rates. The phone segments are marked on the $x$-axis. The grey dashed lines between the two spectrograms mark the phone alignment, which shows that the changes in duration are disproportional across different phones.}
\label{fig:sepct}
\end{figure}

\subsection{Rhythm Information}

Figure~\ref{fig:sepct} shows two utterances of the same word \emph{``Please call Stella''}, with the phone segments marked on the $x$-axis. As shown, although the content is almost the same across the two utterances, each phone in the second utterance is repeated for more frames than in the first utterance. In this paper, we will measure rhythm as the number of frame-aligned repetitions of each phone symbol. 

Formally, denote \e{\bm X(t)} as the spectrogram of a speech utterance, where \e{t} represents the frame index. Denote \e{\bm S} as the vector of phone symbols contained in an utterance, which we also refer to as the content information. Denote \e{\bm R} as a vector of the number of repetitions of each phone symbol in \e{\bm S}, which we also refer to as the rhythm information. Then \e{\bm S} and \e{\bm R} represent two information components in \e{\bm X(t)}. Therefore, the task of disentangling rhythm involves removing phone repetitions while retaining phone identities. Formally, we would like to derive a hidden representation \e{\tilde{\bm Z}} from the speech \e{\bm X}, \emph{i.e.} \e{\tilde{\bm Z} = f(\bm X)}, according to the following objective
\begin{equation}
\small
    \min_{\tilde{\bm Z} = f(\bm X)} I(\bm R; \tilde{\bm Z}),
\end{equation}
\vspace{-0.5em}
\begin{equation}
    \small
    \mbox{s.t. } I(\bm S; \tilde{\bm Z}) = I(\bm S; \bm X),
    \label{eq:constraint}
\end{equation}
where \e{I(\cdot; \cdot)} denotes mutual information. If the text transcriptions were available, this objective could be achieved by training an asynchronous sequence-to-sequence model, such as the Transformer \cite{vaswani2017attention}, to predict the phone sequence from input speech. However, without the phonetic labels, it remains an unresolved challenge to uncover the phonetic units in an unsupervised manner.

\subsection{Disentangling Rhythm by Resampling}
\label{subsec:segment}

Existing unsupervised rhythm disentanglement techniques seek to obscure the rhythm information by temporally resampling the speech sequence or its hidden representation sequence. One common resampling strategy, as proposed by \citet{polyak2019attention}, involves three steps. First, the input sequence is separated into segments of random lengths. Second, for each segment, a sampling rate is randomly drawn from some distribution. Finally, the corresponding segment is resampled at this sampling rate.

To gauge how much rhythm information the random resampling can remove, consider two hypothetical speech sequences, \e{\bm X(t)} and \e{\bm X'(t)}, with the same content information \e{\bm S}, but with different rhythm information \e{\bm R} and \e{\bm R'} respectively. If random resampling is to reduce the 
rhythm information in each utterance,
there must be a non-zero probability that the random resampling temporally aligns them (hence making them identical because their only difference is rhythm). Thus, the original rhythm information distinguishing the two utterances is removed. Formally, denote \e{\tilde{\bm Z}(t)} and \e{\tilde{\bm Z}'(t)} as the resampled outputs of the two unaligned but otherwise identical sequences. Then a necessary condition for reduction in rhythm information is that
\begin{equation}
\small
    Pr(\tilde{\bm Z}(t) = \tilde{\bm Z}'(t)) > 0.
    \label{eq:nes_cond}
\end{equation}
Higher probability removes more rhythm information. If the probability is zero, then \e{I(\bm R; \tilde{\bm Z})} will reach its upper bound, which is \e{I(\bm R; \bm X)}. If the probability is one, then \e{I(\bm R; \tilde{\bm Z})} would reach its lower bound, which is \e{I(\bm R; \bm S)}. Please refer to Appendix~\ref{append:theory} for a more formal analysis.

Therefore, we argue that the above random resampling algorithm does not remove much rhythm information, because it has a low probability of aligning any utterance pairs with the same content but different rhythm patterns. Figure~\ref{fig:sepct} is a counterexample, where the duration of each phone changes disproportionally. The vowel ``ea'' gets more stretched than the consonants. Consider the case where the entire sequence falls into one random segment in the first step of the resampling algorithm. Then, due to the disproportional stretch among the phones, it is impossible to simultaneously align all the phones by uniformly stretching or squeezing the two utterances. If the utterances are broken into multiple random segments, it is possible to achieve the alignment, but this requires the number of random segments is greater than or equal to the number of phones, with at least one segment boundary between each pair of phone boundaries, whose probability of occurring is extremely low.
In short, 
to overcome the limitation of the existing resampling scheme, 
we need better ways to account for the disproportional variations in duration across different phones.

\section{The \algname\ Algorithm}
\begin{figure}[t]
\centering
\includegraphics[width=0.9\linewidth]{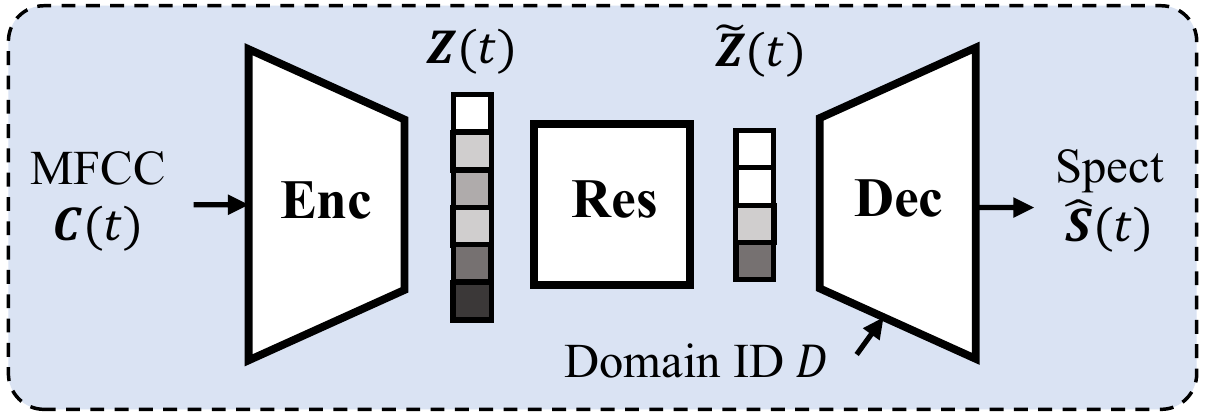}
\caption{The framework of \algname. ``Enc'', ``Res'' and ``Dec'' denote the encoder, the resampling module, and the decoder respectively; ``Spect'' represents the spectrogram. Each block in \e{\bm Z(t)} and \e{\tilde{\bm Z}(t)} represents a frame. The blocks with similar color shades denote that the corresponding frames have a high similarity.}
\label{fig:model}
\end{figure}

In this section, we will introduce \algname, which can effectively overcome the aforementioned limitations. 

\subsection{The Framework Overview}
As shown in Figure~\ref{fig:model}, \algname\ adopts a similar autoencoder structure as \textsc{AutoVC} \cite{qian2019autovc}.
The decoder aims to reconstruct speech based on the output of the random resampling module and the domain identifiers, \emph{e.g.}, the speaker identity. \textsc{AutoVC} has been shown effective in disentangling the speaker's voice via its information bottleneck, but it does not disentangle pitch and rhythm.

Therefore, \algname\ introduces three changes. First, instead of spectrogram, \algname\ takes the 13-dimensional MFCC having little pitch information. Second, \algname\ introduces a novel resampling module 
guided by self-expressive representation learning \cite{Bhati2020}, which can overcome the challenges mentioned in the previous section. Finally, \algname\ adopts a two-stage training scheme to prevent leaking rhythm information.

Formally, denote \e{\bm X(t)} as the speech spectrogram, \e{\bm C(t)} as the input MFCC feature, and \e{D} as the domain identifier. The reconstruction process is described as follows
\begin{equation}
\small
\begin{aligned}
    &\bm Z(t) = \mathrm{Enc}(\bm C(t)), ~ \tilde{\bm Z}(t) = \mathrm{Res}(\bm Z(t)), \\
    & \hat{\bm X}(t) = \mathrm{Dec}(\tilde{\bm Z}(t), D) \longleftrightarrow \bm X(t)
\end{aligned}
\label{eq:framework}
\end{equation}
where \e{\mathrm{Enc}}, \e{\mathrm{Res}}, \e{\mathrm{Dec}} stand for the encoder, the resampling module and the decoder, respectively. Sections~\ref{subsec:downsample} to \ref{subsec:upsample} will introduce the random resampling module, and Section~\ref{subsec:training} will discuss the training scheme.

\begin{figure}[t]
	\centering
	\begin{subfigure}{\columnwidth}
		\centering
		\includegraphics[width=0.9\linewidth]{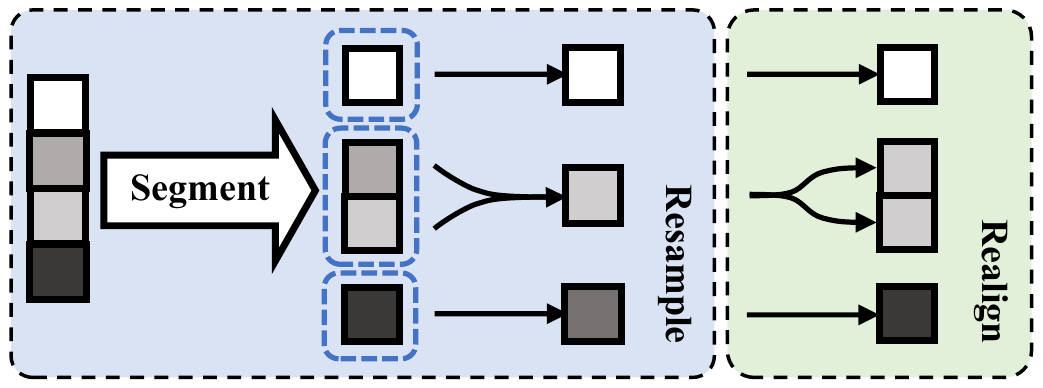}
		\caption{The downsampling case ($\tau \leq 1$)}\label{subfig:downsample}
		\vspace{0.05in}
	\end{subfigure}
	\begin{subfigure}{\columnwidth}
		\centering
		\includegraphics[width=0.9\columnwidth]{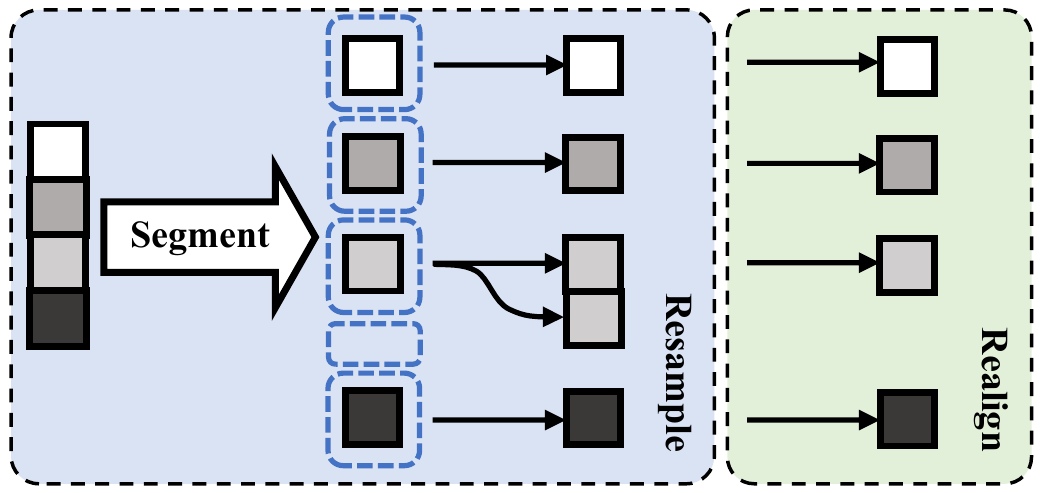}
		\caption{The upsampling case ($\tau > 1$)}\label{subfig:upsample}
	\end{subfigure}
	\caption{The resampling module (left) and realignment module (right) of \algname. Merging arrows denote mean-pooling. Splitting arrows denote copying the input into multiple replicates. The blocks with similar color shades denote that the corresponding frames have a high similarity. According to the shade, frames 2 and 3 are similar, while the others frames are very dissimilar. (a) When $\tau \leq 1$, the sequence is segmented based on similarity, and each segment is merged to one code by mean-pooling. (b) When $\tau > 1$, each segment contains only one code. In addition, empty segments are inserted where the inter-temporal similarity
	is high, whose corresponding output positions replicate the previous codes.}
	\label{fig:resample}
\end{figure}

\begin{figure}[t]
\centering
\includegraphics[width=0.9\linewidth]{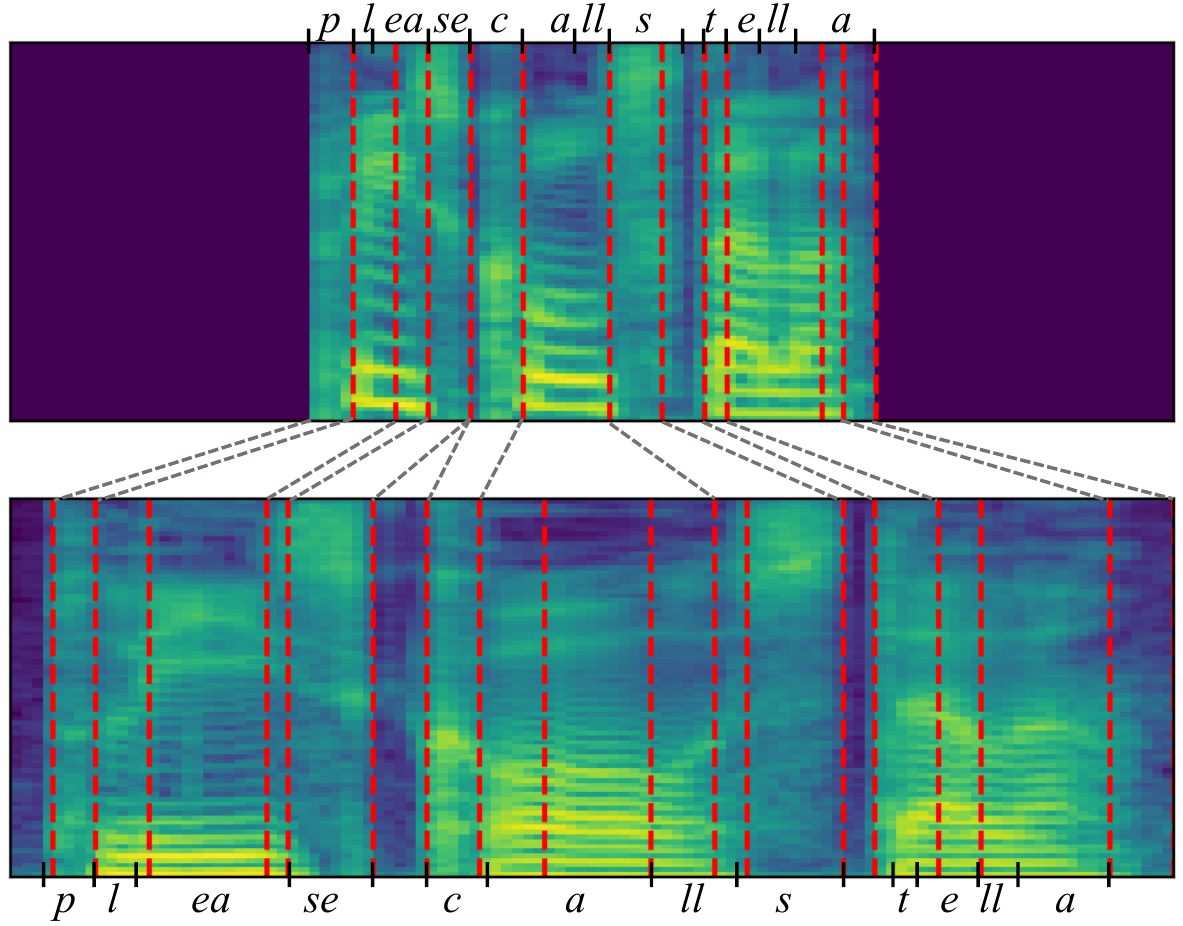}
\caption{Segmentation results of the downsampling algorithm of the two utterances in Figure~\ref{fig:sepct}. The vertical dashed lines represent the segment boundaries. The dashed lines between the two spectrograms show good content alignment among the segments.}
\label{fig:spect_seg}
\end{figure}

\begin{figure*}[t]
\raggedright
\includegraphics[width=1\linewidth]{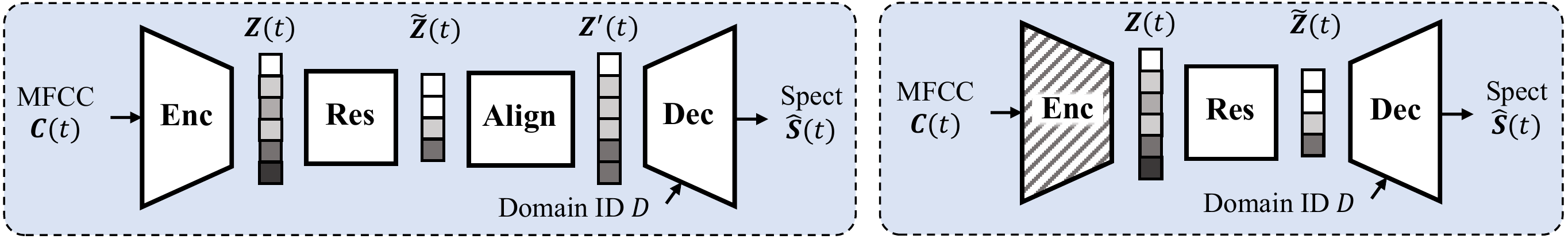}
\\
\hspace{1.2in}\small{(a) Synchronous training.\label{subfig:sync}} \hspace{2in} \small{(b) Asynchronous training.\label{subfig:async}}
\caption{The two-stage training scheme of \algname. ``Align'' denotes the re-alignment module. The striped encoder in (b) means that the parameters in it are frozen. The blocks with similar color shades denote that the corresponding frames have a high similarity.}
\label{fig:training}
\end{figure*}

\subsection{Similarity-Based Downsampling}
\label{subsec:downsample}

Our resampling scheme capitalizes on the observation
that the relatively steady segments in speech tend to have more flexible durations, such as the ``ea'' segment in Figure~\ref{fig:sepct}. We thus modify the self-expressive autoencoder (SEA) algorithm proposed by \citet{Bhati2020} into a similarity-based downsampling scheme. SEA derives a frame-level speech representation, which we denote as \e{\bm A(t)}, that contrastively promotes a high cosine similarity between frames that are similar, and a low cosine similarity between dissimilar frames. We then create a Gram matrix, \e{\mathsf{G}} to record the cosine similarities between any frame pairs:
\begin{equation}
    \small
    \mathsf{G}(t, t') = \frac{\bm A^T(t) \bm A(t')}{\Vert \bm A(t)\Vert_2 \Vert \bm A(t')\Vert_2}.
\end{equation}
More details of the SEA algorithm can be found in Appendix~\ref{subsec:sea} and the original paper \cite{Bhati2020}.

As shown in the left panel of Figure~\ref{fig:resample}(\subref{subfig:downsample}), our downsampling scheme for \e{\bm Z(t)} involves two steps. First, we break \e{\bm Z(t)} into consecutive segments, such that the cosine similarity of \e{\bm A(t)} are high within each segment, and that the cosine similarity drop across the segment boundaries. Second, each segment is merged into one code by mean-pooling.

Formally, denote the \e{t_m} as the left boundary for the $m$-th segment. Boundaries are sequentially determined. When all the boundaries up to \e{t_m} are determined, the next boundary \e{t_{m+1}} is set to \e{t} if \e{t} is the smallest time in \e{(t_m, \infty)} where the cosine similarity between \e{t} and \e{t_m} drops below a threshold:
\begin{equation}
\small
    \forall t' \in [t : t + 1], \quad \mathsf{G}(t_m, t')) \leq \tau(t).
\label{eq:segment}
\end{equation}
\e{\tau(t)} is a pre-defined threshold that can vary across \e{t}. Section~\ref{subsec:threshold} will discuss how to set the threshold. After all the segments are determined, each segment is reduced to one code by mean pooling, \emph{i.e.}
\begin{equation}
\small
    \tilde{\bm Z}(m) = \mathrm{meanpool}\big(\bm Z(t_m : t_{m+1}-1)\big).
    \label{eq:meanpool}
\end{equation}

Figure~\ref{fig:resample}(\subref{subfig:downsample}) shows a toy example, where the input sequence of length four. The second and the third codes are very similar. Then with a proper choice of \e{\tau(t)}, the downsampling would divide the input sequence into three segments, and collapse each segment into one code by mean-pooling. Note that the threshold \e{\tau(t)} governs how tolerant the algorithm is to dissimilarities. If \e{\tau(t)=1}, each code will be assigned to an individual segment, leading to no length reduction.

Figure~\ref{fig:spect_seg} shows the segmentation result of the two utterances shown in Figure~\ref{fig:sepct}, where the vertical dashed lines denote the segment boundaries. We can see that despite their significant difference in length, the two utterances are broken into approximately equal number of segments and the segments have a high correspondence in terms of content. Since the downsampled output is obtained by mean-pooling each segment, we can expect that their downsampled output would be very similar and temporally-aligned, which implies that the necessary condition for rhythm information loss (Equation~\eqref{eq:nes_cond}) is approximately satisfied.

\begin{figure*}[h]
\centering
\includegraphics[width=0.95\linewidth]{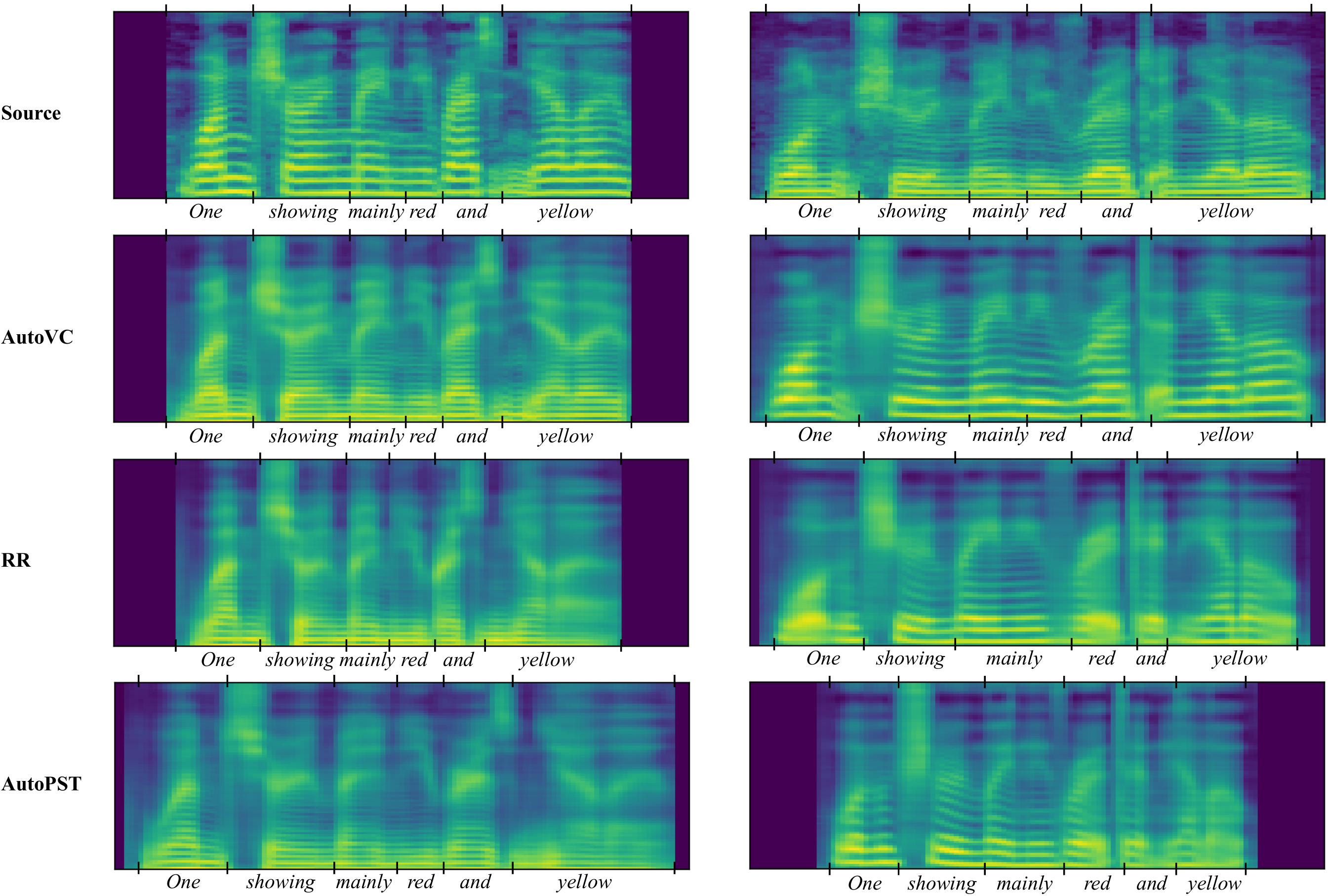}
\\
\raggedright
\hspace{1.3in}\small{(a) Fast-to-slow conversion.} \hspace{1.7in} \small{(b) Slow-to-fast conversion.}
\caption{Mel-spectrograms of the converted utterances of \textit{``One showing mainly red and yellow''} between a fast speaker and a slow speaker. \algname\ is the only algorithm that can significantly change the rhythm to match the target speakers' styles.
}
\label{fig:spect_convert}
\end{figure*}

\subsection{Randomized Thresholding}
\label{subsec:threshold}

For any fixed threshold \e{\tau} in Equation~\eqref{eq:segment}, there is a trade-off between rhythm disentanglement and content loss. The lower the \e{\tau}, the more rhythm information is removed, but the more content is lost as well. Ideally, during testing, we would like to set the threshold to $1$ 
to pass full content information to \e{\tilde{\bm Z}(t)}, and make the decoder ignore all the rhythm information in \e{\tilde{\bm Z}(t)}. This can be achieved with a randomized thresholding rule.

To see why, notice that if the decoder were to use the rhythm information in \e{\tilde{\bm Z}(t)}, it must know the value of \e{\tau}, because how the decoder (partially) recovers the rhythm information depends on how the rhythm information is collapsed, which is governed by \e{\tau}. However, the large variations in speech rate, utterance length and rhythm patterns in the training speech would overshadow the variations in \e{\tau}, making it extremely hard to estimate the value of \e{\tau}. Thus, the decoder will ignore whatever rhythm information remains in \e{\tilde{\bm Z}(t)}.

We adopt a double-randomized thresholding scheme. We first randomly draw a global variable \e{G \sim \mathcal{U}[u_l, u_r]} that is shared across the entire utterance, where \e{\mathcal{U}[u_l, u_r]} denotes the uniform distribution within the interval \e{[u_l, u_r]}. 
Then to determine if time \e{t} should be the next segment boundary (\emph{i.e.}, \e{t_{m+1}} in Equation~\eqref{eq:segment}), we draw a local variable \e{L(t) \sim \mathcal{U}[G-0.05, G+0.05]}. Then
\begin{equation}
\small
    \tau(t) = L(t)\mathrm{-quantile} \big[\mathsf{G}(t_m, t_m - b : t_m + b)\big].
    \label{eq:threshold_dist}
\end{equation}
\e{q\mathrm{-quantile}[\cdot]} denotes taking the \e{q}-quantile, and b denotes the length of the sliding window within which the threshold is computed, which is set to 20 in our implementation.

The motivation for setting the two levels of randomization is that \e{G} can obscure the global speech rate information and \e{L(t)} can obscure the local fine-grained rhythm patterns.

\subsection{Similarity-Based Upsampling}
\label{subsec:upsample}

To further obscure the rhythm information, we generalize our resampling module to accommodate upsampling. Just as downsampling aims to mostly shorten segments with higher similarity (hence decreasing the disproportionality), upsampling aims to mostly lengthen segments with higher similarity (hence increasing the disproportionality).

In the downsampling case, \e{\tau=1} implies no length reduction at all. We thus seek to extrapolate the case to \e{\tau > 1}, where the higher the \e{\tau}, the more the sequence gets lengthened. Our upsampling algorithm achieves this by inserting new codes in between adjacent codes. Specifically, suppose all the boundaries up to \e{t_m} are determined. When \e{\tau(t) > 1}, according to Equation~\eqref{eq:segment}, \e{t_{m+1}} will definitely be set to \e{t}. In addition to this, we will add yet another sentence boundary to \e{t}, \emph{i.e.} \e{t_{m+2} = t}, if
\begin{equation}
\small
    \forall t' \in [t : t + 1], \quad \mathsf{G}(t_m, t') \geq 1-\tau(t).
\end{equation}
In other words, we are inserting an empty segment for the \e{(m+1)}-th segment (because \e{t_{m+1} = t_{m+2}}). During the mean-pooling stage, this empty segment will be mapped to the code at its left boundary, \emph{i.e.},
\begin{equation}
\small
    \tilde{\bm Z}(m) = \bm Z(t_m), \quad \mbox{if } t_m = t_{m+1}.
    \label{eq:meanpool2}
\end{equation}
The non-empty segments will still be mean-pooled the same way as in Equation~\eqref{eq:meanpool}. 

The left panel of Figure~\ref{fig:resample}(\subref{subfig:upsample}) illustrates the upsampling process with the length-four toy example. Similar to the case of \e{\tau=1}, all the codes are individually segmented. The difference is that a new empty segment is inserted after the third code, which is where the cosine similarity is very high. At the mean-pooling stage, this empty segment turns into an additional code that copies the previous code. 

\subsection{Summary of the Resampling Algorithm}

To sum up, our resampling algorithm goes as follows. 

$\bullet$\quad For each frame, a random threshold \e{\tau(t)} is drawn from the distribution specified in Equation~\eqref{eq:threshold_dist}. 

$\bullet$\quad If \e{\tau(t) < 1}, the current frame would be either merged into the previous segment, or start a new segment, depending on whether its similarity with the previous segment exceeds the \e{\tau}, hence achieving downsampling (as elaborated in Section~\ref{subsec:downsample}).

$\bullet$\quad If \e{\tau(t) \geq 1}, the current frame would form either one new segment or two new segments (by duplicating the current frame), depending on whether its similarity with the previous segment exceeds \e{1-\tau(t)}, hence achieving upsampling (as elaborated in Section~\ref{subsec:upsample}).

$\bullet$\quad Move onto the next frame and repeat the previous steps.

Because the threshold for each frame is random, an utterance could be downsampled at some parts, while upsampled at others. This would ensure the rhythm information is sufficiently scrambled. As a final remark, the random threshold distribution (Equation~\eqref{eq:threshold_dist}) is governed by the percentile of the similarity, because the percentile has a direct correspondence to the length of the utterance after resampling. Appendix~\ref{subsec:vis} provides a visualization of how different thresholds affect the length of the utterance after resampling.

\subsection{Two-Stage Training}
\label{subsec:training}

Despite the resampling module, it is still possible for the encoder and decoder to find alternative ways to communicate the rhythm information that is robust against temporal resampling. Thus we introduce a two-stage training scheme to prevent any possible collusion.

The first stage of training, called the \emph{synchronous training}, realigns \e{\tilde{\bm Z}(m)} with \e{\bm Z(m)}, 
as shown in the right panels of Figure~\ref{fig:resample}. Specifically, 
for the downsampling case, we copy each \e{\tilde{\bm Z}(m)} to match the length of the original segment from which the code is mean-pooled; for the upsampling case, we delete the newly inserted \e{\tilde{\bm Z}(m)}. 
The network is then trained end-to-end to reconstruct the input with the realignment module, as shown in Figure~\ref{fig:training}(a). 
Since the decoder has full access to the rhythm information, the encoder will be trained to pass the content information and not the rhythm information.
The second stage, called \emph{asynchronous training}, removes the realignment module, freezes the encoder, and only updates the decoder, as shown in Figure~\ref{fig:training}(b). 

\section{Experiments}
We evaluate \algname\ on speech style transfer tasks. 
Additional experiment results can be found in Appendix~\ref{append:exper_results}. We encourage readers to listen to our online demo audios\footnote{\url{https://auspicious3000.github.io/AutoPST-Demo}}.

\subsection{Configurations}

\textbf{Architecture}\quad The encoder consists of eight $1\times 5$ of convolution layers with group normalization \cite{wu2018group}. The encoder output dimension is set to four. The decoder is a Transformer with four encoder layers and four decoder layers. The spectrogram is converted back to waveform using a WaveNet vocoder \cite{oord2016wavenet}.  More hyper-parameters setting details can be found in Appendix~\ref{append:exper_detail}.

\textbf{Dataset}\quad Our dataset is VCTK \cite{veaux2016superseded}, which consists of 44 hours of speech from 109 speakers. We use this dataset to perform the voice style transfer task, so the domain ID is the speaker ID. We use 24 speakers for training and follow the same train/test partition as in \cite{qian2020unsupervised}. We select the two fastest speakers and two slowest speakers from the seen speakers for evaluating rhythm style transfer. For further evaluation, we select two other speaker pairs with smaller rhythm differences. The first pair consists of speakers whose speech rates are at 25\% and 75\% percentiles among all the test speakers; the second pair at 40\% and 60\%, respectively.
More details can be found in Appendix~\ref{append:exper_detail}.


\textbf{Baselines}\quad We introduce two baselines. The first is the F0-assisted \textsc{AutoVC} \cite{qian2020f0}, an autoencoder-based voice style transfer algorithm. For the second baseline, we replace the \algname's random resampling module with that of the \textsc{SpeechSplit} (as introduced in Section~\ref{subsec:segment}). We refer to this baseline as RR (random resample).

\subsection{Spectrogram Visualization}

For a fast-slow speaker pair and one of their parallel sentences, \textit{``One showing mainly red and yellow''}, Figure~\ref{fig:spect_convert} shows the spectrograms of conversions from the slow speaker to the fast (left panel), and from the fast speaker to the slow (right panel). The word alignment is marked on the $x$-axis. As shown, all the algorithms can change the voice to the target speaker, as indicated by the average F0 and formant frequencies. However, only \algname\  significantly changes the rhythm towards the desired direction. \textsc{AutoVC} and RR barely change the rhythm.
It is also worth noting that \algname\ indeed learns the disproportionality in duration changes across phones, most duration changes occur in the steady vowel segments, \emph{e.g.}, \textit{``ow''} in \textit{``yellow''}. This verifies that our similarity-based resampling scheme can effectively obscure the relative length of each phone.

\begin{figure}[!ht]
\centering
    \begin{subfigure}{\columnwidth}
        \centering
        \includegraphics[width=0.84\linewidth]{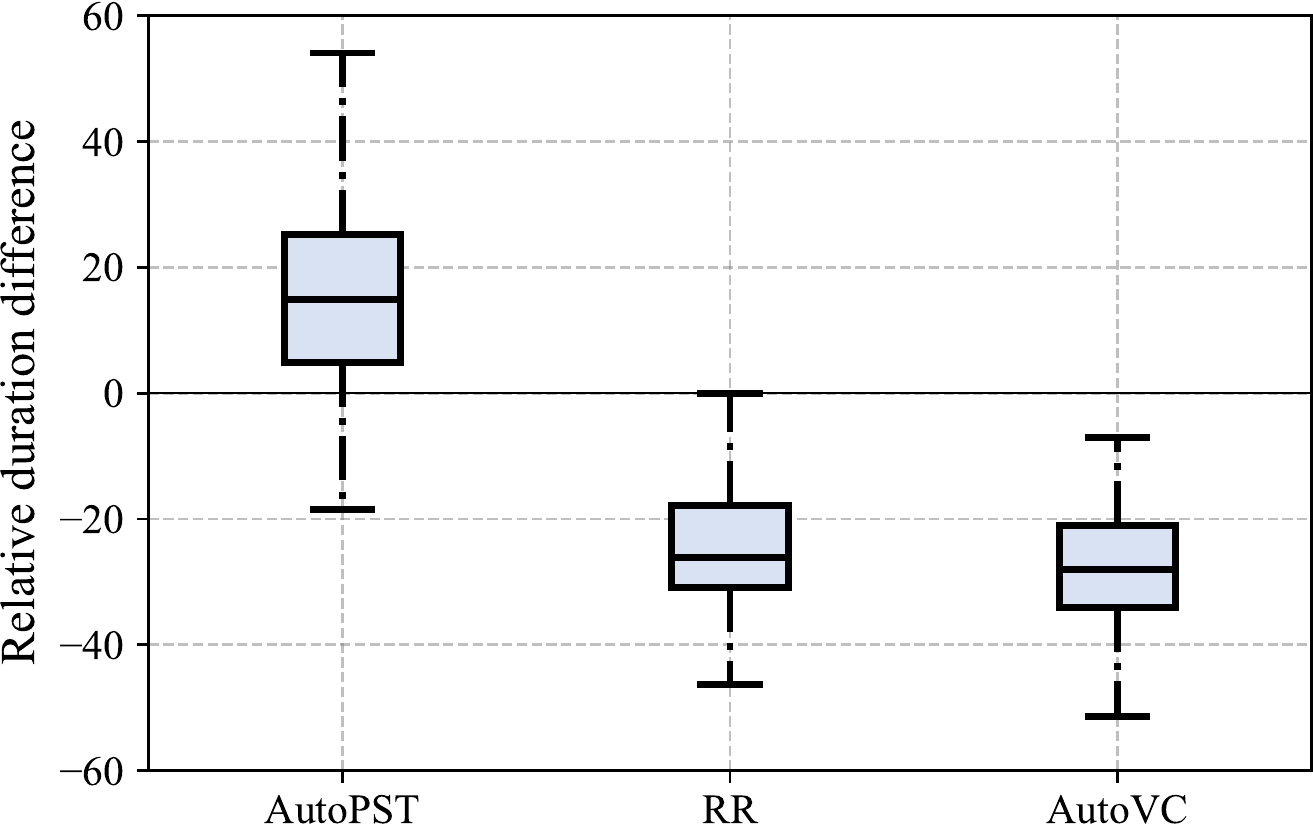}
        \caption{\small{VCTK (100\% vs 0\%)}}\label{fig:box_vctk}
    \end{subfigure}
    
    \begin{subfigure}{\columnwidth}
        \centering
        \includegraphics[width=0.84\linewidth]{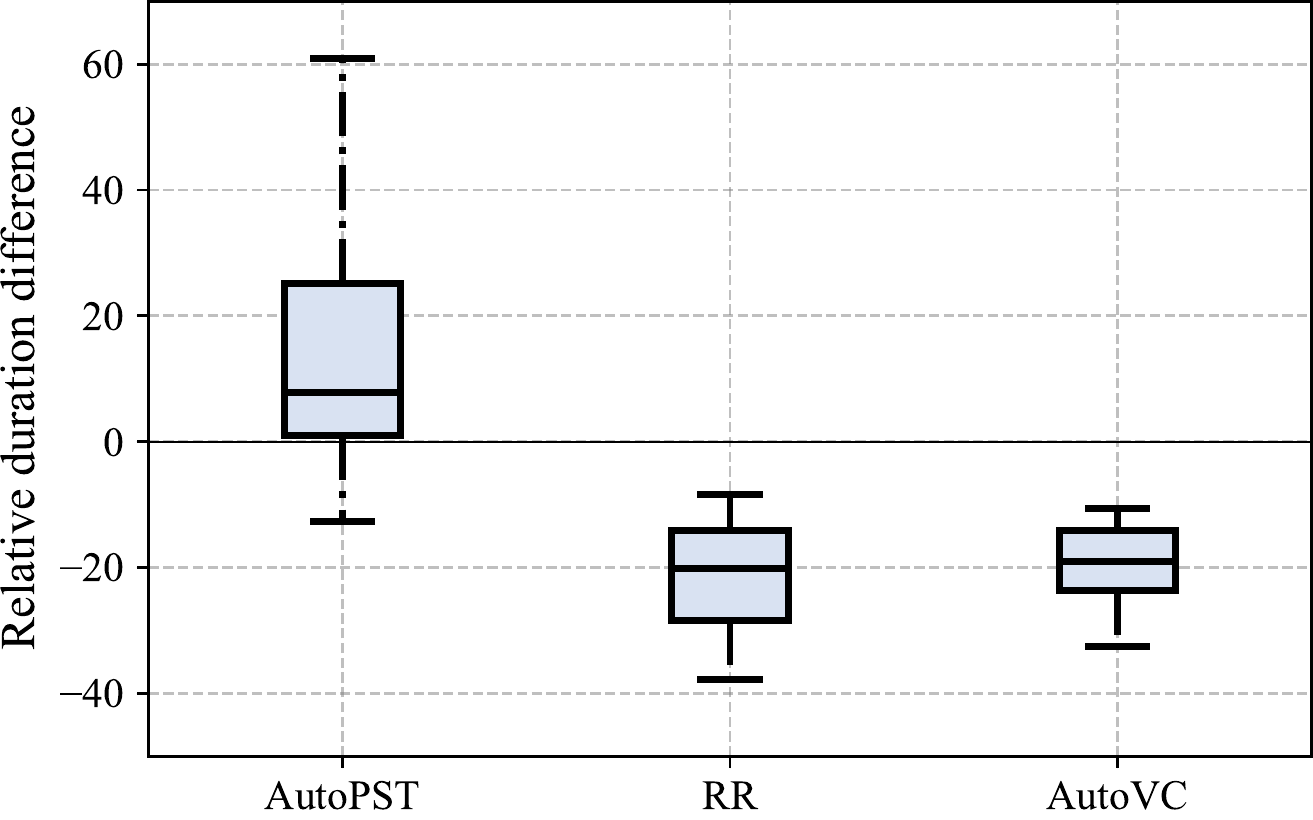}
        \caption{\small{VCTK (75\% vs 25\%)}}\label{fig:box_vctk_2575}
    \end{subfigure}
    
    \begin{subfigure}{\columnwidth}
        \centering
        \includegraphics[width=0.84\linewidth]{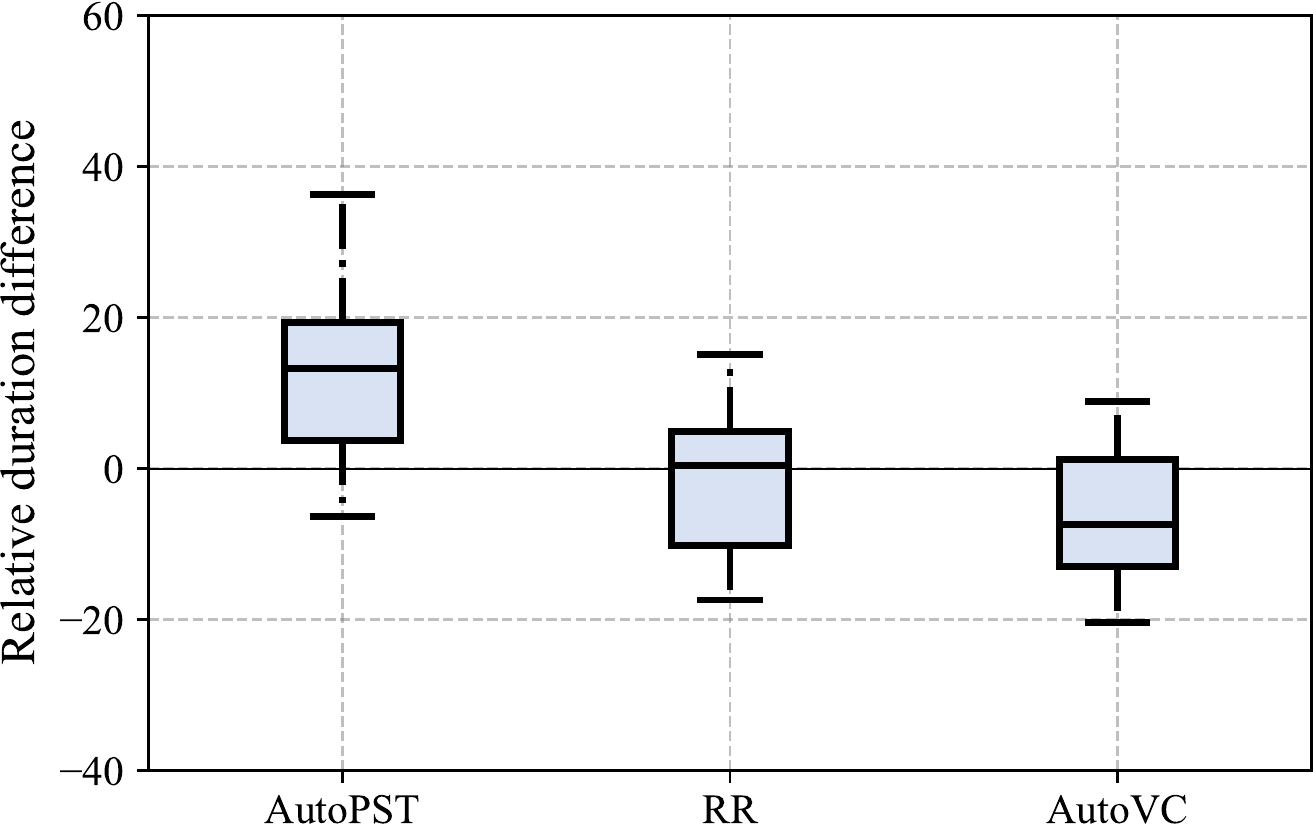}
        \caption{\small{VCTK (60\% vs 40\%)}}\label{fig:box_vctk_4060}
    \end{subfigure}
    
    \begin{subfigure}{\columnwidth}
        \centering
        \includegraphics[width=0.84\linewidth]{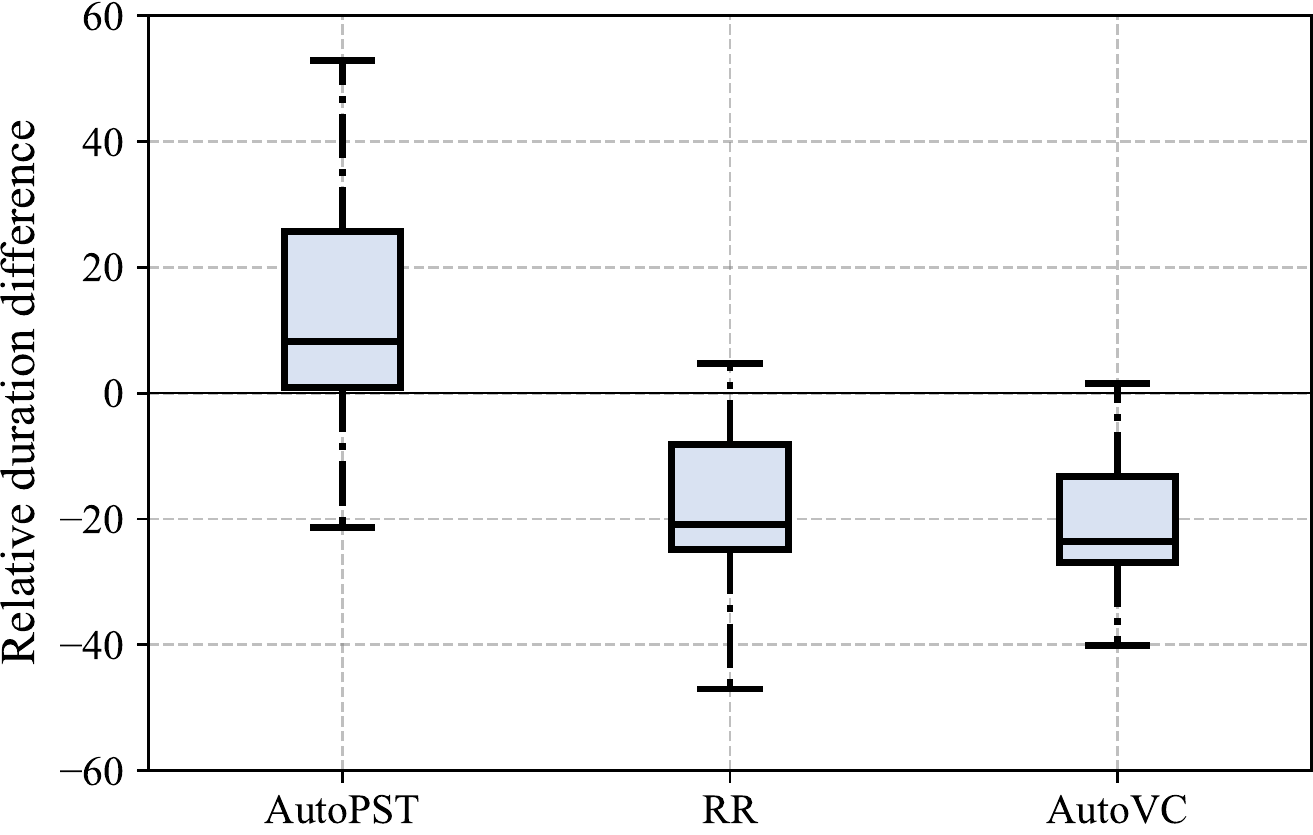}
        \caption{\small{Emo-VDB}}\label{fig:box_vdb}
    \end{subfigure}
\caption{The box plot of the relative duration difference between low-to-fast conversion and fast-to-slow conversion of utterance pairs with the same content. Positive duration differences indicate more sufficient rhythm disentanglement. Percentiles in the subcaptions denote rankings in speech rate from fast to slow.} 
\label{fig:box_plot_vctk}
\end{figure}

\subsection{Relative Duration Difference}
\label{subsec:dur_diff}

To objectively measure the extent to which each algorithm modifies rhythm to match the style of the target speaker, for each test sentence and each fast-slow speaker pair (the test set is a parallel dataset), 
we generate a fast-to-slow and a slow-to-fast conversion. If an algorithm does not change the rhythm at all, then the fast-to-slow version should have a shorter duration than its slow-to-fast counterpart; in contrast, if an algorithm can sufficiently obscure and ignore the rhythm information in the source speech, then it can flip the ordering. We compute relative duration difference as
\begin{equation}
    \small
    \mbox{Relative Duration Difference} = (L_\textrm{F2S} - L_\textrm{S2F}) / {L_\textrm{S2F}},
    \label{eq:dur_diff}
\end{equation}
where \e{L_\textrm{F2S}} and \e{L_\textrm{S2F}} denote the lengths of fast-to-slow and slow-to-fast conversions, respectively.
If the rhythm disentanglement is sufficient, this difference should be positive.

Figure~\ref{fig:box_plot_vctk}(\subref{fig:box_vctk}) shows the box plot of the relative duration differences across all test sentences and all the top four fastest-slowest speaker pairs. Figure~\ref{fig:box_plot_vctk}(\subref{fig:box_vctk_2575}) and Figure~\ref{fig:box_plot_vctk}(\subref{fig:box_vctk_4060}) show the results on speaker pairs with smaller rhythm differences (75\% vs 25\% and 60\% vs 40\% respectively). As shown, only \algname\ can achieve a positive average relative duration difference, which verifies its ability to disentangle rhythm. \textsc{AutoVC} gets the most negative relative duration differences, which is expected because it does not modify duration at all. RR achieves almost the same negative duration differences, which verifies that its rhythm disentanglement is insufficient. 

\begin{table}[]
    \centering
    \small
    \caption{Subjective evaluation results.}
    \begin{tabular}{l|ccc}
        \hline\hline
         & \algname & RR & \textsc{AutoVC} \\
        \hline
        \textbf{Timbre} & \textbf{4.29 $\pm$ 0.032} & 4.07 $\pm$ 0.037 & 4.26 $\pm$ 0.034\\
        \textbf{Prosody} & \textbf{3.61 $\pm$ 0.053} & 2.97 $\pm$ 0.063 & 2.64 $\pm$ 0.066\\
        \textbf{Overall} & \textbf{3.99 $\pm$ 0.036} & 3.63 $\pm$ 0.045 & 3.49 $\pm$ 0.052\\
        \hline\hline
    \end{tabular}
    \label{tab:subject}
\end{table}

\subsection{Subjective Evaluation}

To better evaluate the overall quality of prosody style transfer, and whether prosody style transfer improves the perceptual similarity to the target speaker, we performed a subjective evaluation on Amazon Mechanical Turk. 
Specifically, in each test unit, the subject first listens to two randomly ordered reference utterances from the source and target speaker respectively. Then the subject listens to a converted utterance from the source to the target speaker by one of the algorithms. Note that the content of the converted utterance is different from that in the reference utterances. Finally, the subject is asked to assign a score of 1-5 to describe the similarity to the target speaker in one of the three aspects: \emph{prosody similarity}, \emph{timbre similarity}, and \emph{overall similarity}. 
A score of 5 means entirely like the target speaker; 1 means completely like the source speaker; 3 means somewhat between the two speakers. Each algorithm has 79 utterances, where each utterance is assigned to 5 subjects.

Table~\ref{tab:subject} shows the subjective similarity scores. As shown, \algname\  has a significant advantage in terms of prosody similarity over the baselines, which further verifies that \algname\ can generate a prosody style that is perceived as similar to the target speaker. In terms of timbre similarity, \algname\  performs on-par with \textsc{AutoVC}, and the gaps among the three algorithms are small because all three algorithms apply the same mechanism to disentangle timbre. 

For the overall similarity, it is interesting to see how the subjects weigh the different aspects in their decisions. Specifically, although \textsc{AutoVC} can satisfactorily change the timbre, it is still perceived as only very slightly similar to the target speaker, because the prosody is not converted at all. In contrast, the \algname\  results, with all aspects transferred, are perceptually much more similar to the target speaker. This result shows that prosody indeed plays an important role in characterizing a speaker's style and should be adequately accounted for in speech style transfer.

\subsection{Restoring Abnormal Rhythm Patterns}

So far, our experiments have mostly focused on the overall speech rate. One question we are interested in is whether \algname\ can recognize fine-grained rhythm patterns, or can only adjust speech rate globally. To study this question, we modify the utterance \textit{``One showing mainly red and yellow''} in Figure~\ref{fig:spect_convert}(a) by stretching \textit{``yellow''} by two times, creating an abnormal rhythm pattern. We then let RR and \algname\  reconstruct the utterance from this abnormal input. If these algorithms can recognize fine-grained rhythm patterns, they should be able to restore the abnormality.

Figure~\ref{fig:restore} shows the reconstructed spectrogram from the abnormal input. As shown, RR attempts to reduce the overall duration, but it seems unable to reduce the abnormally long word \textit{``yellow''} more than the other words. In contrast, \algname\ not only restores the overall length, but also largely restores the word \textit{``yellow''} to its normal length. This shows that \algname\ can indeed capture the fine-grained rhythm patterns instead of blindly adjusting the speech rate.


\subsection{Emotion Style Transfer}
\label{subsec:emotion}

Although \algname\ is designed for voice style transfer, we nevertheless also test \algname\ on the much harder non-parallel emotion style transfer to investigate the generalizability of our proposed framework. We use the EmoV-DB dataset \cite{adigwe2018emotional}, which contains acted expressive speech of five emotion categories (amused, sad, neutral, angry, and sleepy) from four speakers. During training, two emotion categories are randomly chosen for each speaker and held out, for the purpose of evaluating generalization to unseen emotion categories for each speaker.

Among the five emotions, neutral has the fastest speech rate and sleepy has the slowest. We thus follow the same step in Section~\ref{subsec:dur_diff} to compute the relative duration difference between fast-to-slow and slow-to-fast emotion conversions for each speaker. Figure~\ref{fig:box_plot_vctk}(\subref{fig:box_vdb}) shows the box plot. Consistent with the observations in Section~\ref{subsec:dur_diff}, \algname\ can bring most of the relative duration differences to positive numbers, whereas the baselines cannot. This result shows that \algname can generalize to other domains. Additional results can be found in Appendix~\ref{append:exper_results}.

\section{Conclusion}
\begin{figure}[t]
\centering
\includegraphics[width=0.95\linewidth]{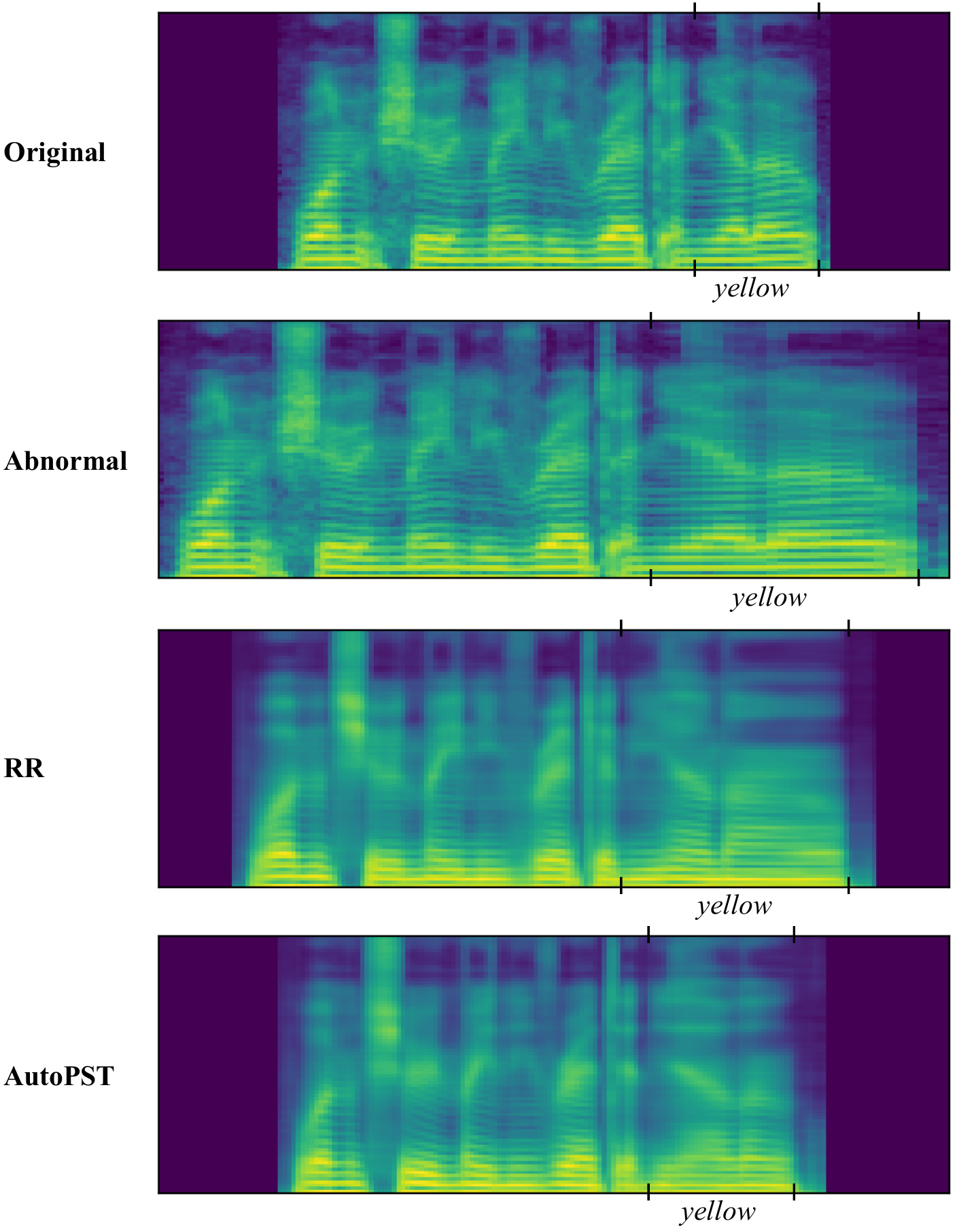}
\caption{Mel-spectrograms of reconstruction from \textit{``One showing mainly red and yellow''} with an abnormally long \textit{``yellow''}. \algname\ can largely restore the duration to its original length.}
\label{fig:restore}
\end{figure}

  In this paper, we propose \algname\ an autoencoder based prosody style transfer algorithm that does not rely on text transcriptions or fine-grained local prosody style information. We have empirically shown that \algname\ can effectively convert prosody, particularly the rhythm aspect, to match the target domain style. There are still some limitations of \algname. Currently, in order to disentangle the timbre information, \algname\ introduces a very harsh limitations on the dimension of the hidden representations. However, this would compromise the quality of the converted speech. How to strike a better balance between timbre disentanglement, prosody disentanglement, and audio quality remains a challenging future direction. Also, it has been shown by \citet{deng2021unsupervised} that \textsc{AutoVC} performs poorly when given in-the-wild audio examples. Since \algname\ inherits the basic framework of \textsc{AutoVC}, it is unlikely to generalize well to in-the-wild examples either. Improving the generalizability to audios recorded in different environments is a future direction. 
  
 \section*{Acknowledgment}

We would like to give special thanks to Gaoyuan Zhang from MIT-IBM Watson AI Lab, who has helped us a lot with building our demo webpage.


\bibliography{ref}
\bibliographystyle{icml2021}

\appendix
\newpage

\section{Rhythm Disentanglement from an Information-Theoretic Perspective}
\label{append:theory}
In this appendix, we will provide further explanation on why Equation~\eqref{eq:nes_cond} is necessary for rhythm disentanglement, \emph{i.e.} reducing \e{I(\bm R; \tilde{\bm Z})}. Before the formal analysis, we state the following assumption:
\begin{equation}
    \small
    H(\bm S | \bm X) = 0, \quad H(\bm R | \bm X) = 0,
    \label{eq:main_assump}
\end{equation}
which means that the phonetic symbol and the rhythm information is completely determined given the speech utterance. 
In the following, we will show two theorems, which serve as an elaboration of the brief discussion below Equation~\eqref{eq:nes_cond}.
\begin{theorem}
Assume that Equations~\eqref{eq:constraint} and \eqref{eq:main_assump} hold. If \e{\forall \bm X(t)} and \e{\bm X'(t)} with the same content information \e{\bm S}, but with different rhythm information, \e{\bm R} and \e{\bm R'} respectively,
\begin{equation}
\small
    Pr(\tilde{\bm Z} = \tilde{\bm Z}') = 0,
    \label{eq:thm1_assump}
\end{equation}
then
\begin{equation}
\small
    I(\bm R; \tilde{\bm Z}) = I(\bm R; \bm X).
\end{equation}
\end{theorem}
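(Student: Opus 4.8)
The plan is to reduce the claim to the single statement that $\bm R$ is almost surely a deterministic function of $\tilde{\bm Z}$. Indeed, by \eqref{eq:main_assump} we have $H(\bm R\mid\bm X)=0$, so
\begin{equation}
I(\bm R;\bm X) = H(\bm R) - H(\bm R\mid\bm X) = H(\bm R), \qquad I(\bm R;\tilde{\bm Z}) = H(\bm R) - H(\bm R\mid\tilde{\bm Z}),
\end{equation}
and therefore $I(\bm R;\tilde{\bm Z}) = I(\bm R;\bm X)$ holds if and only if $H(\bm R\mid\tilde{\bm Z})=0$. (The bound $I(\bm R;\tilde{\bm Z})\le I(\bm R;\bm X)$, i.e.\ $H(\bm R\mid\tilde{\bm Z})\ge 0$, is automatic from the data-processing inequality applied to the Markov chain $\bm R\to\bm X\to\tilde{\bm Z}$, so the real content is the reverse direction, namely $H(\bm R\mid\tilde{\bm Z})\le 0$.)

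Before handling $\bm R$, I would record the analogous fact for the content $\bm S$: combining the constraint \eqref{eq:constraint} with $H(\bm S\mid\bm X)=0$ (again from \eqref{eq:main_assump}) gives
\begin{equation}
H(\bm S) - H(\bm S\mid\tilde{\bm Z}) = I(\bm S;\tilde{\bm Z}) = I(\bm S;\bm X) = H(\bm S),
\end{equation}
so $H(\bm S\mid\tilde{\bm Z})=0$; that is, the content is already determined by $\tilde{\bm Z}$.

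Now I would argue by contradiction. Suppose $H(\bm R\mid\tilde{\bm Z})>0$; then there is a value $z$ in the support of $\tilde{\bm Z}$ and two distinct rhythm values $\bm r\ne\bm r'$ with $\Pr(\bm R=\bm r\mid\tilde{\bm Z}=z)>0$ and $\Pr(\bm R=\bm r'\mid\tilde{\bm Z}=z)>0$. By the previous paragraph, on the event $\{\tilde{\bm Z}=z\}$ the content equals a single value $\bm s$. Consider an $\bm X$ drawn from the law of the input conditioned on $(\bm S=\bm s,\bm R=\bm r)$ and, independently, an $\bm X'$ drawn conditioned on $(\bm S=\bm s,\bm R=\bm r')$; by construction each of these produces the output $z$ with strictly positive probability, so $\Pr(\tilde{\bm Z}=\tilde{\bm Z}')\ge \Pr(\tilde{\bm Z}=z)\,\Pr(\tilde{\bm Z}'=z)>0$. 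But $\bm X$ and $\bm X'$ share the content $\bm s$ while having different rhythm $\bm r\ne\bm r'$, contradicting \eqref{eq:thm1_assump}. Hence $H(\bm R\mid\tilde{\bm Z})=0$, and by the first paragraph $I(\bm R;\tilde{\bm Z})=I(\bm R;\bm X)$.

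The step I expect to be the main obstacle is making the contradiction argument rigorous when $\tilde{\bm Z}$ is continuous-valued: the events $\{\tilde{\bm Z}=z\}$ may have probability zero, so ``a value $z$ with positive conditional mass in both rhythm classes'' must be replaced by a positive-probability measurable cell of $\tilde{\bm Z}$-space on which $\bm S$ is a.s.\ constant and $\bm R$ is non-degenerate, and $\Pr(\tilde{\bm Z}=\tilde{\bm Z}')>0$ must be read as positivity of the probability that two independent draws both land in that cell. This also forces one to pin down the precise meaning of \eqref{eq:thm1_assump} — a statement about the independent coupling of the two conditional input laws indexed by $(\bm s,\bm r)$ and $(\bm s,\bm r')$ — after which the remainder is routine entropy bookkeeping. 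In a finite-alphabet model these subtleties vanish and the argument above goes through verbatim.
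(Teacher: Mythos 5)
Your proposal is correct and takes essentially the same route as the paper: both first use Equation~\eqref{eq:constraint} together with $H(\bm S\mid\bm X)=0$ to get $H(\bm S\mid\tilde{\bm Z})=0$, then use the zero-coincidence hypothesis \eqref{eq:thm1_assump} to conclude that $\bm R$ is determined by $\tilde{\bm Z}$, and finish with the same entropy bookkeeping $I(\bm R;\tilde{\bm Z})=H(\bm R)=I(\bm R;\bm X)$. The only difference is that you establish the middle step by a contrapositive/coupling argument (and flag the continuous-valued and ``meaning of \eqref{eq:thm1_assump}'' subtleties), whereas the paper asserts it directly as an injective map from $\bm R$ to $\tilde{\bm Z}$ given $\bm S$, i.e.\ $H(\bm R\mid\tilde{\bm Z},\bm S)=0$, and then combines with $H(\bm S\mid\tilde{\bm Z})=0$; your rendering of that step is, if anything, slightly more careful than the paper's.
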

\begin{proof}
According to \eqref{eq:thm1_assump}, when \e{\bm S} given, there forms an injective mapping from \e{\bm R} to \e{\tilde{\bm Z}}, and thus
\begin{equation}
    \small
    H(\bm R | \tilde{\bm Z}, \bm S) = 0.
    \label{eq:H_R|Z_S}
\end{equation}
On the other hand, according to Equation~\eqref{eq:constraint}
\begin{equation}
    \small
    I(\bm S ; \tilde{\bm Z}) = I (\bm S; \bm X) = H(\bm S) - H(\bm S | \bm X) = H(\bm S),
\end{equation}
where the last equality is due to Equation~\eqref{eq:main_assump}, and thus
\begin{equation}
    \small
    H(\bm S | \tilde{\bm Z}) = I(\bm S ; \tilde{\bm Z}) - H(\bm S) = 0.
    \label{eq:H_S|Z}
\end{equation}
Plug Equation~\eqref{eq:H_S|Z} into Equation~\eqref{eq:H_R|Z_S}, we have
\begin{equation}
    \small
    H(\bm R | \tilde{\bm Z}) = 0.
    \label{eq:H_R|Z}
\end{equation}
Therefore
\begin{equation}
    \small
    \begin{aligned}
    I(\bm R; \tilde{\bm Z}) &= H(\bm R) - H(\bm R | \tilde{\bm Z}) \\
    &= H(\bm R) \\
    &= H(\bm R) - H(\bm R | \bm X) \\
    &= I(\bm R; \bm X),
    \end{aligned}
\end{equation}
where the second line is given by Equation~\eqref{eq:H_R|Z}; the third line is given by Equation~\eqref{eq:main_assump}.
\end{proof}
\begin{theorem}
Assume that Equations~\eqref{eq:constraint} and \eqref{eq:main_assump} hold. If \e{\forall \bm X(t)} and \e{\bm X'(t)} with the same content information \e{\bm S}, but with different rhythm information, \e{\bm R} and \e{\bm R'} respectively,
\begin{equation}
\small
    Pr(\tilde{\bm Z} = \tilde{\bm Z}') = 1,
    \label{eq:thm2_assump}
\end{equation}
then
\begin{equation}
\small
    I(\bm R; \tilde{\bm Z}) = I(\bm R; \bm S).
\end{equation}
\end{theorem}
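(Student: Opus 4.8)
The plan is to mirror the structure of the proof of Theorem~1, but now playing the two hypotheses against each other: the assumption \eqref{eq:thm2_assump} will say that $\tilde{\bm Z}$ carries \emph{no} information beyond $\bm S$, while the constraint \eqref{eq:constraint} (exactly as used in Theorem~1) will say that $\tilde{\bm Z}$ still carries \emph{all} of $\bm S$. Together these make $\bm S$ and $\tilde{\bm Z}$ deterministic functions of one another up to a null set, so conditioning $\bm R$ on either one gives the same entropy, which is what the claim reduces to after writing $I(\bm R;\tilde{\bm Z}) = H(\bm R) - H(\bm R\mid\tilde{\bm Z})$ and $I(\bm R;\bm S) = H(\bm R) - H(\bm R\mid\bm S)$.

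First I would convert \eqref{eq:thm2_assump} into a conditional-entropy statement. Since $\tilde{\bm Z} = f(\bm X)$ and, by \eqref{eq:main_assump}, both $\bm S$ and $\bm R$ are deterministic functions of $\bm X$, the hypothesis that $Pr(\tilde{\bm Z} = \tilde{\bm Z}') = 1$ for every pair $\bm X,\bm X'$ sharing the same content $\bm S$ forces $\tilde{\bm Z}$ to be almost surely constant on each level set of $\bm S$; equivalently $H(\tilde{\bm Z}\mid\bm S) = 0$. This is the exact counterpart of the injectivity argument that produced \eqref{eq:H_R|Z_S} in the proof of Theorem~1, with the direction reversed, so I would invoke the same reasoning.

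Second, I would reuse verbatim the chain $I(\bm S;\tilde{\bm Z}) = I(\bm S;\bm X) = H(\bm S)$ coming from \eqref{eq:constraint} and \eqref{eq:main_assump}, which yields $H(\bm S\mid\tilde{\bm Z}) = 0$ as in \eqref{eq:H_S|Z}: $\bm S$ is a deterministic function of $\tilde{\bm Z}$. Then I would combine the two determinism facts. Because $\bm S$ is a function of $\tilde{\bm Z}$, adding $\bm S$ to the conditioning is free and extra conditioning cannot increase entropy, so $H(\bm R\mid\tilde{\bm Z}) = H(\bm R\mid\tilde{\bm Z},\bm S) \le H(\bm R\mid\bm S)$. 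Because $\tilde{\bm Z}$ is a function of $\bm S$, symmetrically $H(\bm R\mid\bm S) = H(\bm R\mid\bm S,\tilde{\bm Z}) \le H(\bm R\mid\tilde{\bm Z})$. Hence $H(\bm R\mid\tilde{\bm Z}) = H(\bm R\mid\bm S)$, and therefore $I(\bm R;\tilde{\bm Z}) = H(\bm R) - H(\bm R\mid\tilde{\bm Z}) = H(\bm R) - H(\bm R\mid\bm S) = I(\bm R;\bm S)$.

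The only genuinely delicate step is the first one: passing from the pairwise almost-sure equality \eqref{eq:thm2_assump} to the clean statement $H(\tilde{\bm Z}\mid\bm S) = 0$ requires the same measure-theoretic bookkeeping of the ``for all $\bm X,\bm X'$'' quantifier that the paper already carries out in the proof of Theorem~1; everything after that is a short entropy computation using only monotonicity of conditional entropy under additional conditioning and the freeness of conditioning on a deterministic function.
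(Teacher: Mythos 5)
Your proposal is correct and follows essentially the same route as the paper's proof: both turn the hypothesis into ``$\tilde{\bm Z}$ is determined by $\bm S$'' (so conditioning on $\tilde{\bm Z}$ alongside $\bm S$ is free), reuse $H(\bm S\mid\tilde{\bm Z})=0$ from the Theorem~1 argument, and conclude $H(\bm R\mid\tilde{\bm Z}) = H(\bm R\mid\tilde{\bm Z},\bm S) = H(\bm R\mid\bm S)$. Your extra sandwich of inequalities is harmless but unnecessary, since the two conditioning-is-free equalities already give the result directly, exactly as in the paper.
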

\begin{proof}
According to \eqref{eq:thm2_assump}, when \e{\bm S} given, \e{\tilde{\bm Z}} is a constant regardless of the value of \e{\bm R}, and thus
\begin{equation}
\small
    H(\bm R | \tilde{\bm Z}, \bm S) = H(\bm R | \bm S).
    \label{eq:H_R|Z_S_2}
\end{equation}
Plug Equation~\eqref{eq:H_S|Z} into Equation~\eqref{eq:H_R|Z_S_2}, we have
\begin{equation}
\small
    H(\bm R | \tilde{\bm Z}) = H(\bm R | \bm S).
\end{equation}
Therefore
\begin{equation}
    \small
    \begin{aligned}
        I(\bm R; \tilde{\bm Z}) &= H(\bm R) - H(\bm R | \tilde{\bm Z}) \\
        &= H(\bm R) - H(\bm R | \bm S) \\
        &= I(\bm R; \bm S).
    \end{aligned}
\end{equation}
\end{proof}

\section{Additional Algorithm Details}
\label{append:sea}

In this appendix, we will cover the additional algorithm details of \algname. 
Our implementation is available very soon.

\subsection{Self-Expressive Autoencoder (SEA)}
\label{subsec:sea}

SEA aims to derive an representation \e{\bm A(t)}, which is very similar among similar frames, and very disimilar among disimilar frames. SEA consists of one encoder, which derives \e{\bm A(t)} from input speech, and two decoders. One decoder reconstructs the input based on \e{\bm A(t)}. The other decoder reconstructs the input based on another representation \e{\bm B(t)}, which is computed as follows
\begin{equation}
    \small
    \bm B(t) = \sum_{t' \neq t} \left(\frac{\bm A^T(t') \bm A(t)}{\Vert \bm A(t')\Vert_2 \Vert\bm A(t)\Vert_2} \right) \cdot \bm A(t').
    \label{eq:sea}
\end{equation}
The entire pipeline is trained jointly to minimize the reconstruction loss of both decoders. The intuition behind this self-expressive mechanism is that in order to achieve good reconstructions at both decoders, \e{\bm A(t)} should be similar to \e{\bm B(t)}. According to Equation~\eqref{eq:sea}, \e{\bm A'(t)} is essentially a linear combination of the representation of all the other frames, and the combination weight is the cosine similarity. If frame \e{t'} is dissimilar to the current frame \e{t} to be expressed, the weight has to be close to zero, otherwise \e{\bm B(t)} would be made dissimilar to \e{\bm A(t)}; if frame \e{t'} is similar to the current frame \e{t} to be expressed, the weight has to be close to one to ensure it is contributing sufficiently to \e{\bm B(t)}.

\subsection{F0 and UV Conditioning}

One of the biggest challenges of \algname\  is to infer the F0 contour and voiced/unvoiced (UV) states of the utterances based solely on the input MFCC features, because it is very hard for the system to elicit lexical, semantic and syntactic information without text transcriptions. To mitigate this problem, we introduce two optional conditioning, one on the input F0 contour and one on UV states. The sequence being conditioned upon is concatenated with the encoder output along the channel dimension \emph{before} being sent to the similarity-based resampling module. The concatenation is valid because the conditioning sequence has the same temporal length as the encoder output, which are both equal to the temporal length of the input.

Although the F0 conditioning will largely resolve the ambiguity of reconstructing the F0 contour, it will make the algorithm unable to convert the F0 aspect of prosody, which is undesirable in many applications. On the other hand, UV conditioning can still resolve the ambiguity of reconstructing the F0 information to some extent, while maintaining \algname's ability to disentangle the pitch aspect. In practice, we can choose F0 conditioning, UV conditioning, or neither depending on different applications.

\subsection{Domain Identity Conditioning}

According to Equation~\eqref{eq:framework}, the decoder takes the domain identity, \e{D}, as a second input. This is achieved by first feeding \e{D} to a feedforward layer. Then, the feedforward layer is appended to the first time step of the encoder output sequence, \e{\tilde{\bm Z}(t)}, as well as to the first time step of the memory of the decoder (Our decoder is a Transformer; the memory is the output of the Transformer encoder). In other words, the total length of the encoder output and the memory will increase by one after the appending, with the first time step being the appended domain identity. Since the decoder is a Transformer, it can elicit the domain identity information by attending to the first time step.

\subsection{Output and Losses}

In addition to outputting the reconstructed spectrogram, \e{\hat{\bm S}(t)}, the \algname\ decoder also outputs a stop token prediction. Stop token, denoted as \e{P(t)}, guides when the sequential spectrogram generation should stop. It is a scalar sequence that equals zero at time steps within the lengths of the ground truth spectrogram, and equals one at time steps after the ground truth spectrogram has ended. Denote the predicted stop token as \e{\hat{P}(t)}. Then the total loss function consists of the \e{\ell_2} loss for spectrogram prediction and the cross-entropy loss for stop token prediction:
\begin{equation}
    \small
    \begin{aligned}
    \mathcal{L} = &\mathbb{E}_{train}\Big[\sum_{t=1}^T \Vert \hat{\bm S}(t) - \bm S(t)\Vert_2^2 + \\
    & \sum_{t=1}^{T+k} w P(t) \log \hat{P}(t) + (1-P(t)) \log (1-\hat{P}(t))\Big].
    \end{aligned}
\end{equation}
The expectation is taken over the training set. $T$ denotes the total length of the spectrogram. Notice that in the second summation, $t$ runs to from 1 to \e{T+k}, which $k$ time steps longer than the spectrogram. This is because the ground truth stop token is always zero for \e{t \leq T}. It is only equal to one for \e{t > T}. So we set $k$ steps for the positive samples.

Nevertheless, the negative examples still occur much more often than the positive examples. To fix the unbalanced label problem, we add a positive weight, \e{w}, to the positive class.


\section{Experiment Details}
\label{append:exper_detail}

In this appendix, we will cover the additional details of our experiments.

\subsection{Architecture}

The \algname\ encoder is a simple 8-layer 1D convolutional network, where each layer uses $5 \times 1$ filters with $1$ stride, SAME padding and ReLU activation. GroupNorm \cite{wu2018group} is applied to every layer. The number of filters is 512 for the first five layers, and the last three layers have 128, 32, and 4 filters respectively. The \algname\ decoder is a Transformer \cite{vaswani2017attention}, which has four encoder layers and four decoder layers. The model dimension is 256 and the number of heads is eight. 

\subsection{Training}

We implement our model using Pytorch 1.6.0, and we train our model on a single Tesla V100 GPU using Adam optimizer. Synchronous training takes $3\times 10^5$ steps with a batch size of 4. Asynchronous training takes $6\times 10^5$ steps with a batch size of 4. We use Pre-LN \cite{xiong2020layer} without the warm-up stage.

\subsection{Datasets}

For the VCTK dataset, our test set consists of parallel utterances from 24 speakers. In order to identify speakers with the fastest and slowest speech rates, we take the average of the log duration of the test utterances. Since all the utterances are spoken in parallel by all of the speakers, the speaker with the smallest log duration can be considered as the fastest speaker, and the speaker with the largest log duration can be considered as the slowest speaker. We use the log duration instead of duration because the differences between the average log duration can be nicely interpreted as the average percentage difference in duration of the same sentence uttered by the two speakers. We then select the two fastest speakers (P231 and P239), and the two slowest speakers (P270 and P245) for our main evaluation. For further evaluation, we select two speaker pairs with smaller rhythm differences, one with speakers ranking 25\% and 75\% (P244 and P226) in speech rate; the other with ranking 40\% and 60\% (P240 and P256) in speech rate. For the Emo-VDB dataset, we follow a similar protocol of finding the fastest emotion, which is neutral, and the slowest emotion, which is sleepy, for our evaluation.

\subsection{Subjective Evaluation}

For the subjective evaluation, 18 sentences are generated for each fast-slow speaker pair (9 for fast-to-slow conversions and 9 for slow-to-fast conversions), and there are four fast-slow speaker pairs, summing to 72 utterances for each algorithm. Each utterance is assigned to five subjects. When evaluating voice similarity, the subjects are explicitly asked to focus only on voice but not on prosody; when evaluating prosody similarity, the subjects are asked to focus only on prosody but not on voice; when evaluating the overall speaker similarity, the subjects are asked to pay attention to all the aspects of speech, including voice and prosody.

For each test, the reference utterances (one from the source speaker and one from the target speaker) are randomized and named speaker 1 and speaker 2. The reference utterances are different from the test utterance in terms of content. The subjects are asked to assign a score of 1-5 on whether the aforementioned aspects sound more similar to speaker 1 or speaker 2, with 1 meaning completely like speaker 1 and 5 meaning completely like speaker 2. These scores are then converted to the similarity between the source and target speakers, with 1 meaning completely like the source speaker and 5 meaning completely like the target speaker.

\section{Additional Experiment Results}
\label{append:exper_results}

In this section, we will present some additional visualization and ablation study results.

\subsection{Similarity-based Resampling Visualization}
\label{subsec:vis}

In order to intuitive show the effect of our similarity-based resampling module, we design the following experiment. We first train a variant of \algname\ \emph{without} the random resampling module, so that it can generate a spectrogram that is synchronous with the hidden representations. Then, we performed the similarity-based random resampling on the hidden representation, and generate a \emph{time-synchronous} spectrogram from the resampled representations. In this way, we can intuitively see how much each segment is being shortened/lengthened by observing the time-synchronous spectrogram. 

Figure~\ref{fig:resample_vis} shows the reconstructed spectrograms from the resampled code of the utterance \emph{``Please call Stella''}. The left figures corresponds to the downsampling case, with \e{\tau} dropping from \e{0.98} down to \e{0.9}. The right figures corresponds to the upsampling case, with \e{\tau} increasing from \e{1.02} up to \e{1.1}. The top figure (Figure~\ref{fig:resample_vis}(\subref{subfig:1.0})) is the reference spectrogram without resampling. There are two observations. First, the total length of the code decreases as \e{\tau} decreases, and increases as \e{\tau} increases. Second, the relatively steady segments get stretched or shortened most, such as the \emph{``a''} segment in the second word.

\subsection{Upsampling v.s. Downsampling}

\algname\ adopts both similarity-based upsampling (Section~\ref{subsec:downsample}) and similarity-based downsampling (Section~\ref{subsec:upsample}). This section shows why both are necessary. In particular, we trained two variants of \algname, one without upsampling, and one without downsampling. All the other settings remain the same. We then perform fast-to-slow and slow-to-fast conversions the same way as in Section~\ref{subsec:dur_diff}. For each conversion, we compute the relative duration difference with respect to the \emph{original source speech}, \emph{i.e.}
\begin{equation}
    \small
    \begin{aligned}
    &\mbox{F2S Relative Duration Diff} = (L_{F2S} - L_{source}) / L_{source} \\
    &\mbox{S2F Relative Duration Diff} = (L_{S2F} - L_{source}) / L_{source}.
    \end{aligned}
\end{equation}
Note that this is different from the relative duration difference computed in Equation~\eqref{eq:dur_diff}. If an algorithm truly converts rhythm to the correct direction, it should have a positive F2S relative duration difference and a negative S2F relative duration difference.

Table~\ref{tab:up_vs_down} shows the relative duration differences. As can be seen, \algname\ can correctly change the rhythm to the desired direction. However, without either of the resampling module, the rhythm conversion becomes incorrect. If upsampling is removed, both fast-to-slow and slow-to-fast will increase the duration. If downsampling is removed, both fast-to-slow and slow-to-fast will decrease the duration. One possible explanation for this is that random resampling is only enforced during training. During testing, the random resampling will be removed (equivalent to the \e{\tau=1} case). If either downsampling or upsampling is removed, the test case, \e{\tau=1}, becomes a corner case, undesirably passing a duration bias. By having both upsampling and downsampling, we can ensure \e{\tau=1} is a well-represented mode among the training instances.

\begin{table}[t]
    \centering
    \caption{The relative duration difference with respect the original utterance. F2S denotes fast-to-slow conversion; S2F denotes slow-to-fast conversion. ``No Up'' denotes \algname\ with upsampling removed; ``No Down'' denotes \algname\ with downsampling removed. The numbers outside the parentheses represent the average; the numbers in the parentheses represent the standard deviation. The desired outcome should be a positive average for fast-to-slow and a negative average for slow-to-fast.}
    \vspace{0.1in}
    \begin{tabular}{c|ccc}
    \hline\hline
            & \algname & No Up & No Down  \\
    \hline
        F2S & 26.53 (23.05) & 32.79 (12.68) & -13.65 (6.46) \\
        S2F & -16.95 (23.70) & 13.47 (21.96) & -27.46 (5.00) \\
        \hline\hline
    \end{tabular}
    \label{tab:up_vs_down}
\end{table}

\subsection{Removing Two-Stage Training}

\begin{figure}[t]
\centering
\includegraphics[width=0.9\linewidth]{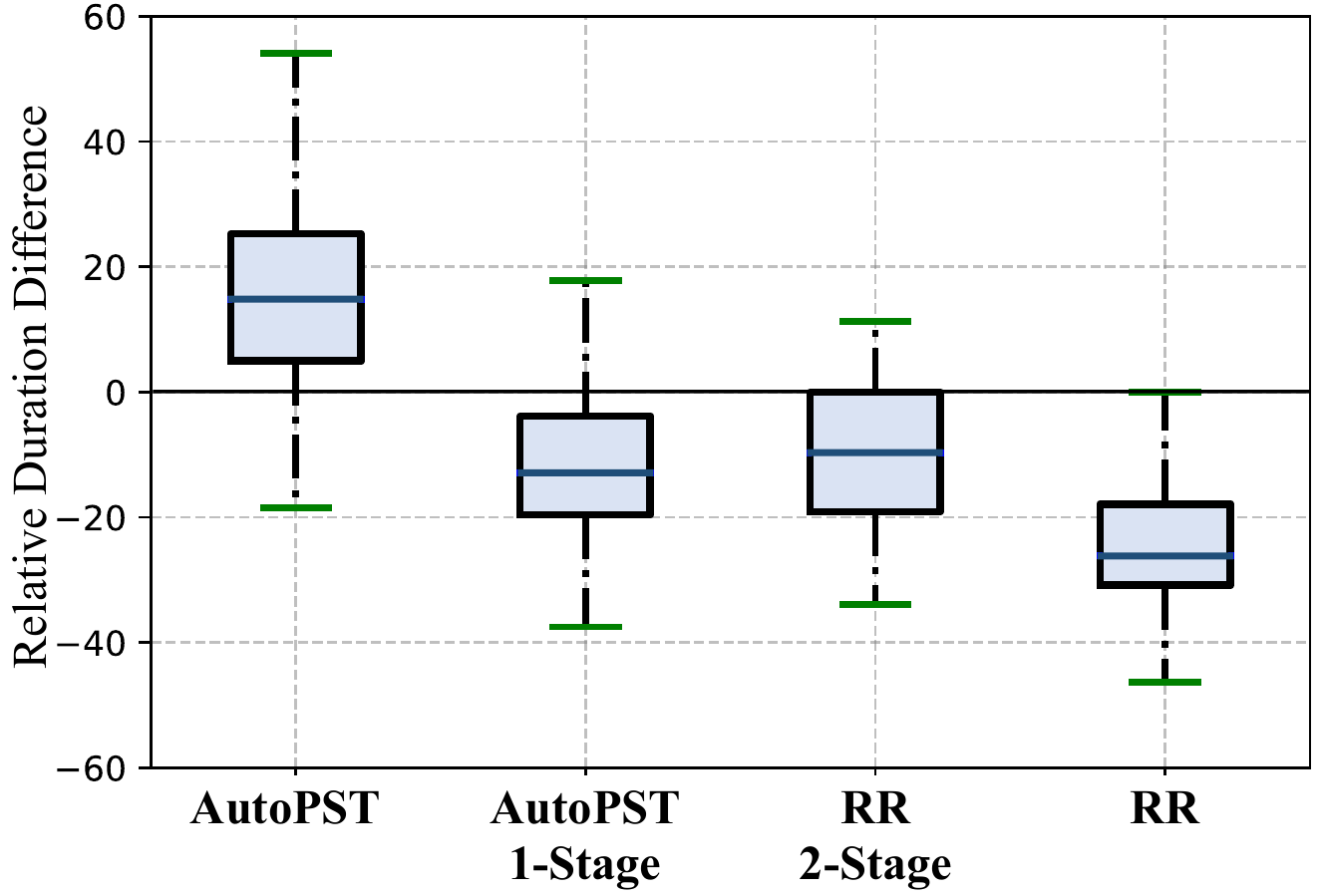}
\caption{Ablation study on two-stage training using the same relative duration difference experiment shown in Figure~\ref{fig:box_plot_vctk}. \algname\ 1-Stage denotes the \algname\ algorithm trained in an end-to-end manner (without two-stage training). RR 2-Stage denotes the RR baseline with the two-stage training.}
\label{fig:box_plot_ablation}
\end{figure}

As discussed, there are two mechanisms that promote prosody disentanglement. The first is similarity-based resampling; the second is two-stage training. In this section, we will explore how much each mechanism contributes to the performance advantage of \algname. In particular, we implement two variants of the algorithms. The first variant, called \algname\  1-Stage, removes the two-stage training of \algname, while all the other settings remain the same as \algname. The second variant, called RR 2-Stage, supplements the RR baseline with two-stage training. We then create the same box plot of relative duration difference as discussed Section~\ref{subsec:dur_diff}.

Figure~\ref{fig:box_plot_ablation} shows the results. As can be seen, without either two-stage training or similarity-based random resampling, the performance drops significantly, which implies that both mechanisms are essential for a successful rhythm disentanglement.

\subsection{Generalization to Unseen Emotions}

\begin{figure}[t]
\centering
\includegraphics[width=0.9\linewidth]{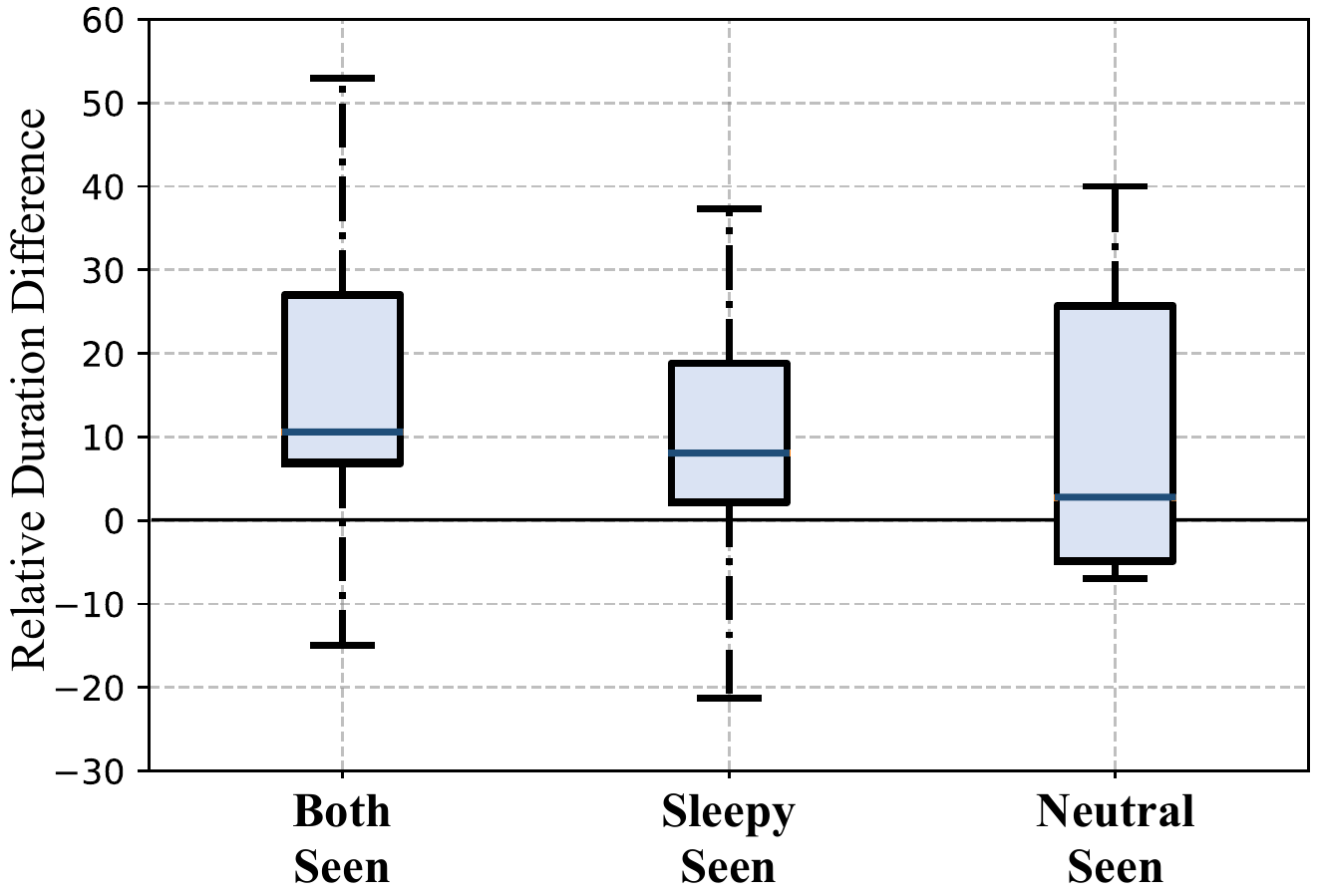}
\caption{Breakdown of Figure~\ref{fig:box_plot_vctk}(\subref{fig:box_vdb}) into 1) the speaker who has training examples of both neutral and sleepy, 2) the speakers who have training examples of sleepy only; and 3) the speaker who has training examples of neutral only.}
\label{fig:box_plot_emo_ablation}
\end{figure}

As mentioned in Section~\ref{subsec:emotion}, for the emotion conversion experiment, we deliberately remove certain emotion categories for each speaker from the training set. As a result, some speakers do not have training examples of either neutral or sleepy, or both. To examine whether \algname\  generalize to unseen speakers, we break the \algname\ samples in Figure~\ref{fig:box_plot_vctk}(\subref{fig:box_vdb}) into three groups. The first group consists of the speaker who has training examples of both neutral and sleepy. The second group consists of speakers who have training examples of only sleepy. The third group consists of the speaker who has training examples of only neutral.

Figure~\ref{fig:box_plot_emo_ablation} shows the box plot of the relative duration difference (same as Figure~\ref{fig:box_plot_vctk}(\subref{fig:box_vdb})) for these three groups. As can be seen, there is a slight performance advantage if both emotion categories are seen. However, even if there is one unseen emotion, the performance is still pretty competitive, demonstrating good generalizability to unseen emotions.

\subsection{Training Domain Embedding}
The domain ID is to assumed to present during testing, but we would like to explore the possibility of doing zero-shot conversion. Thus, we trained a variant of \algname\, where a domain encoder replaces the one-hot domain embedding. During testing, we only need to feed the domain encoder with a target speaker's utterance without needing the domain ID. Figure~\ref{fig:autopst_ge2e} shows the relative duration difference of this variant, which performs slightly worse than the original \algname\ due to the increased difficulty, but still significantly better than the baselines. 

\begin{figure}[t]
\centering
\includegraphics[width=0.9\linewidth]{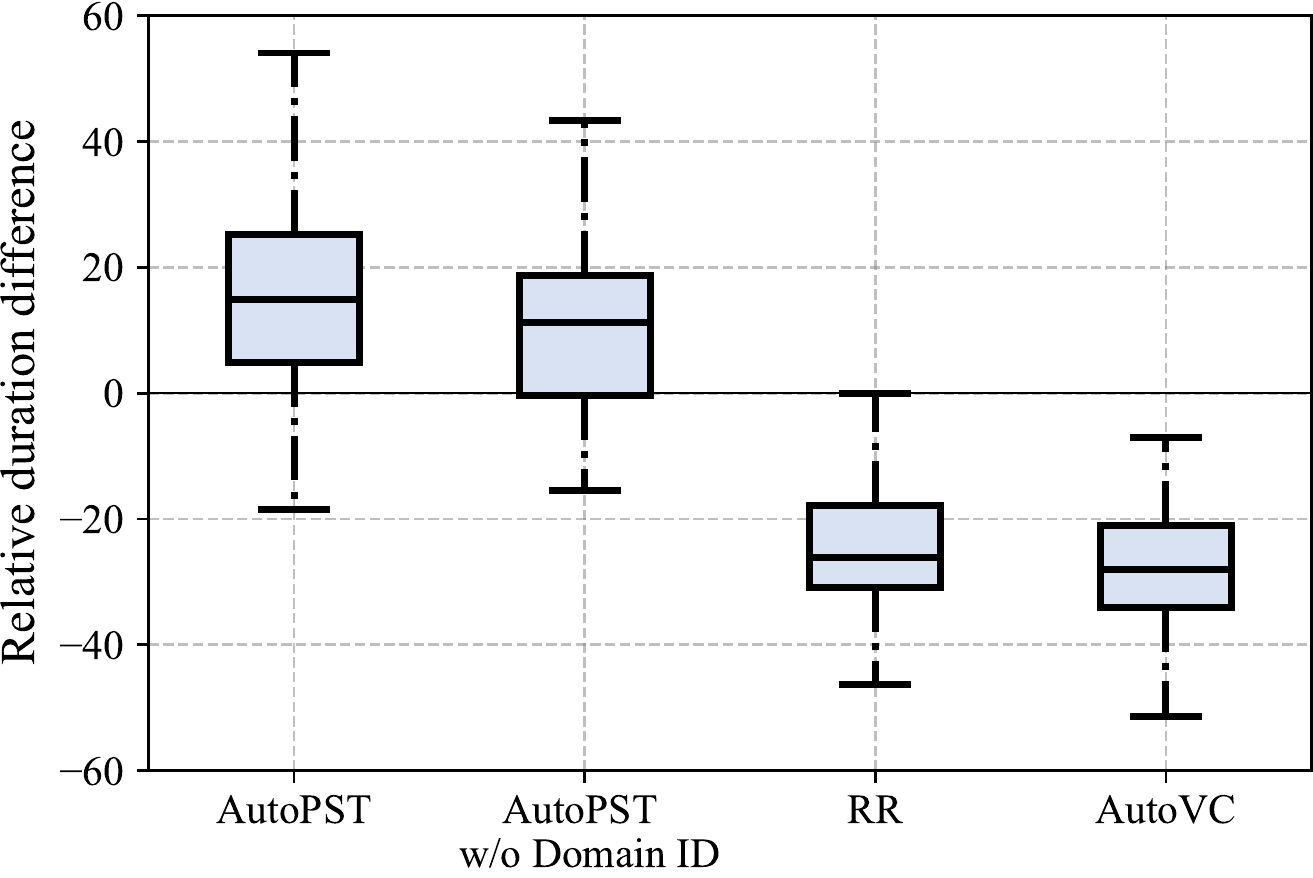}
\caption{The box plot of the relative duration difference between slow-to-fast conversion and fast-to-slow conversion of the same utterance pairs as used in Figure~\ref{fig:box_plot_vctk}(\subref{fig:box_vctk}). \algname\ w/o Domain ID denotes \algname\ variant trained using domain embedding.}
\label{fig:autopst_ge2e}
\end{figure}

\subsection{\textsc{SpeechSplit} Baseline}
Although \textsc{SpeechSplit} can also perform prosody style transfer, it requires ground truth target rhythm, \emph{i.e.} the target speaker speaking the source utterance. On the other hand, \algname\ seeks to perform prosody style transfer without the ground truth, which is a much harder task. Nevertheless, we show the relative duration difference of \textsc{SpeechSplit} in Figure~\ref{fig:spsp_baseline}. Note since the converted rhythm is very close to the ground truth, \textsc{SpeechSplit} is the performance upper-bound of any prosody conversion, including \algname. However, when there is no ground truth available, \textsc{SpeechSplit} becomes vulnerable. To show this, Figure~\ref{fig:spsp_baseline} also shows the result of \textsc{SpeechSplit} with a random target speaker utterance, instead of the ground truth target utterance, fed into its rhythm encoder. As can be observed, the relative duration difference almost completely concentrate around zero, which indicates that \textsc{SpeechSplit} completely fails in this case.

\begin{figure}[t]
\centering
\includegraphics[width=0.9\linewidth]{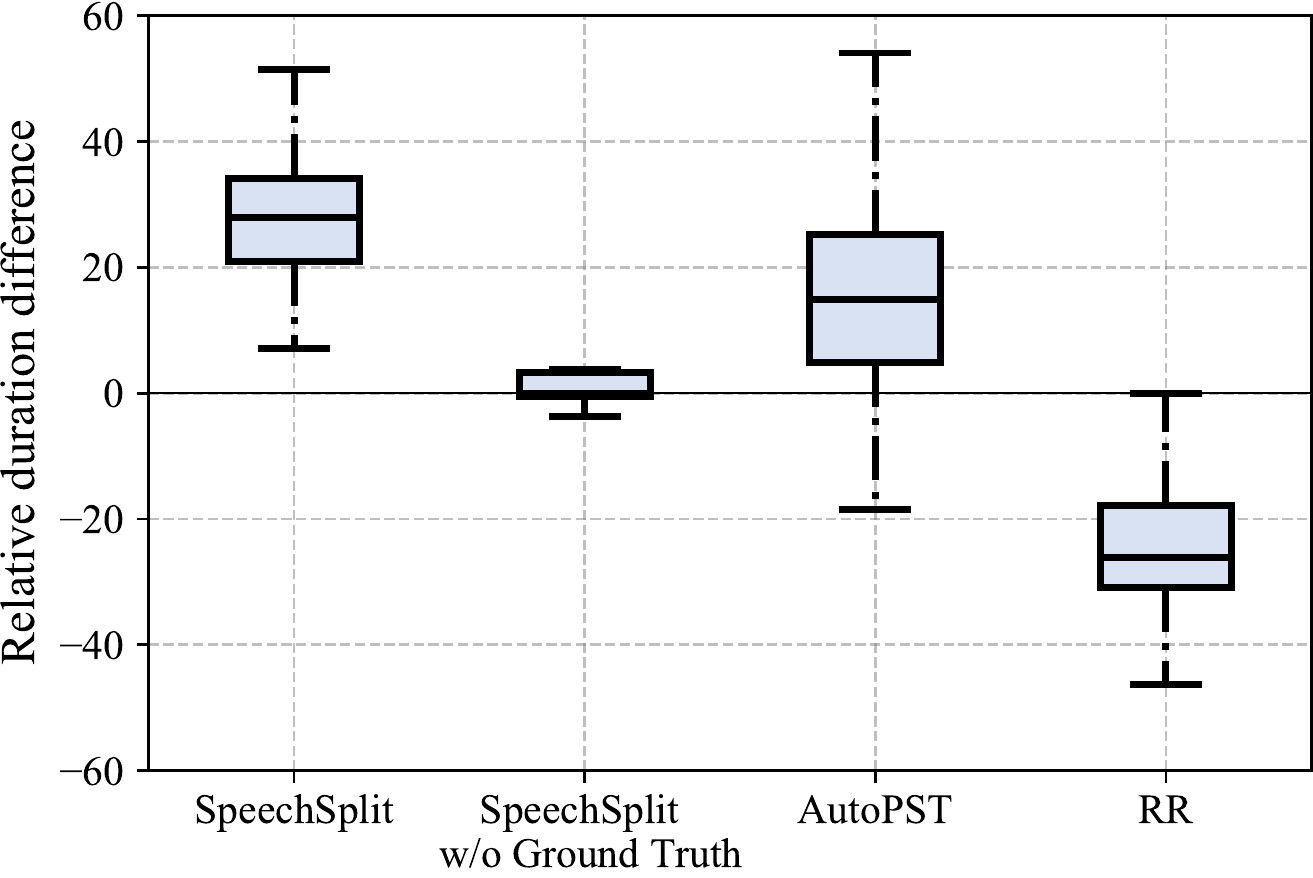}
\caption{The box plot of the relative duration difference between slow-to-fast conversion and fast-to-slow conversion of the same utterance pairs as used in Figure~\ref{fig:box_plot_vctk}(\subref{fig:box_vctk}). SpeechSplit w/o }
\label{fig:spsp_baseline}
\end{figure}




\begin{figure*}[t]
	\centering
	\begin{subfigure}{\columnwidth}
		\centering
		\includegraphics[width=0.9\linewidth]{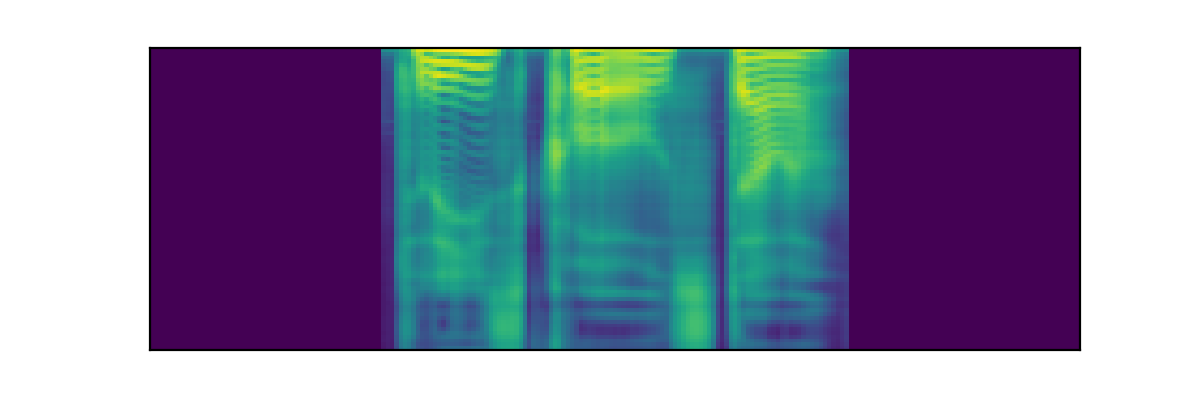}
		\caption{$\tau=1.0$}\label{subfig:1.0}
	\end{subfigure}
	\\
	\begin{subfigure}{\columnwidth}
		\centering
		\includegraphics[width=0.9\linewidth]{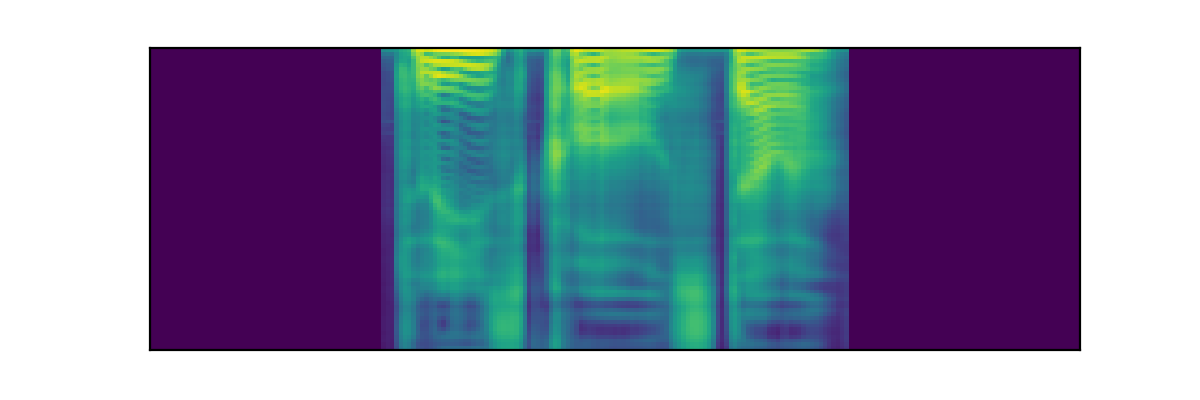}
		\caption{$\tau=0.98$}\label{subfig:0.98}
	\end{subfigure}
	\begin{subfigure}{\columnwidth}
		\centering
		\includegraphics[width=0.9\linewidth]{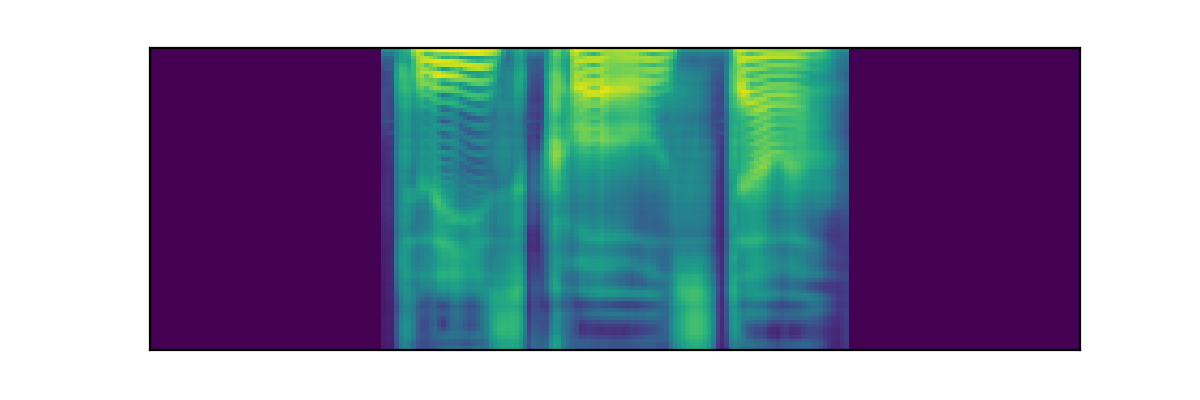}
		\caption{$\tau=1.02$}\label{subfig:1.02}
	\end{subfigure}
	\begin{subfigure}{\columnwidth}
		\centering
		\includegraphics[width=0.9\linewidth]{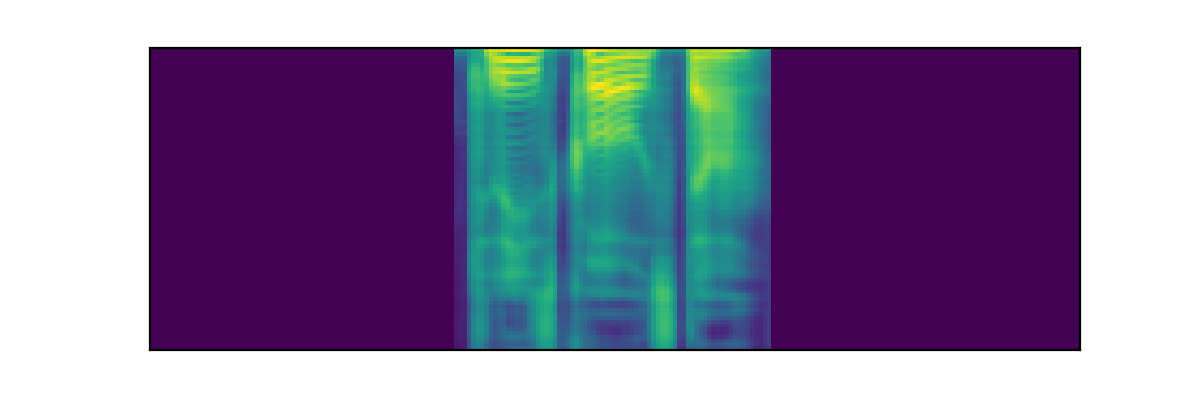}
		\caption{$\tau=0.96$}\label{subfig:0.96}
	\end{subfigure}
	\begin{subfigure}{\columnwidth}
		\centering
		\includegraphics[width=0.9\linewidth]{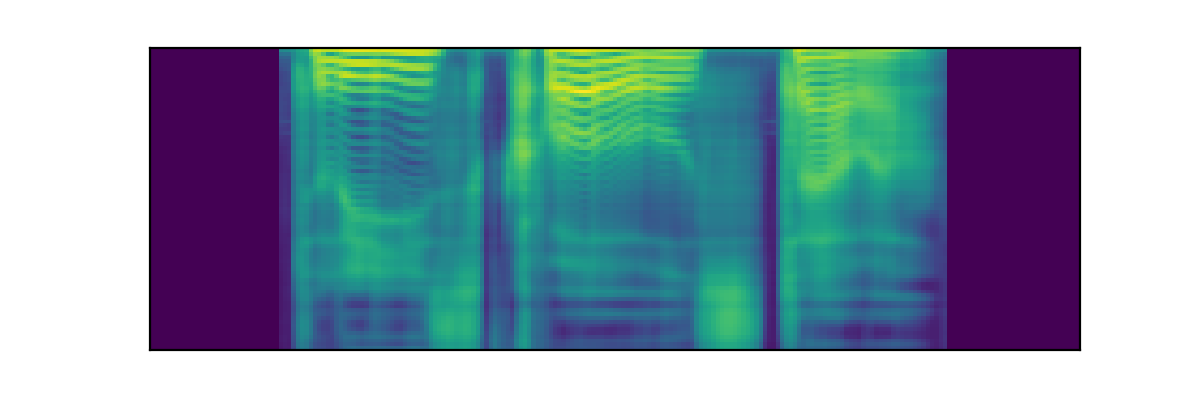}
		\caption{$\tau=1.04$}\label{subfig:1.04}
	\end{subfigure}
	\begin{subfigure}{\columnwidth}
		\centering
		\includegraphics[width=0.9\linewidth]{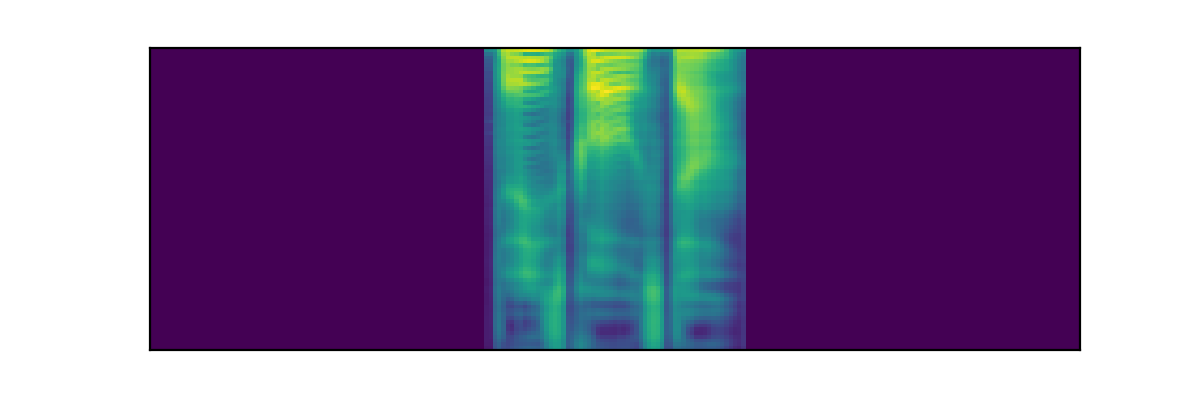}
		\caption{$\tau=0.94$}\label{subfig:0.94}
	\end{subfigure}
	\begin{subfigure}{\columnwidth}
		\centering
		\includegraphics[width=0.9\linewidth]{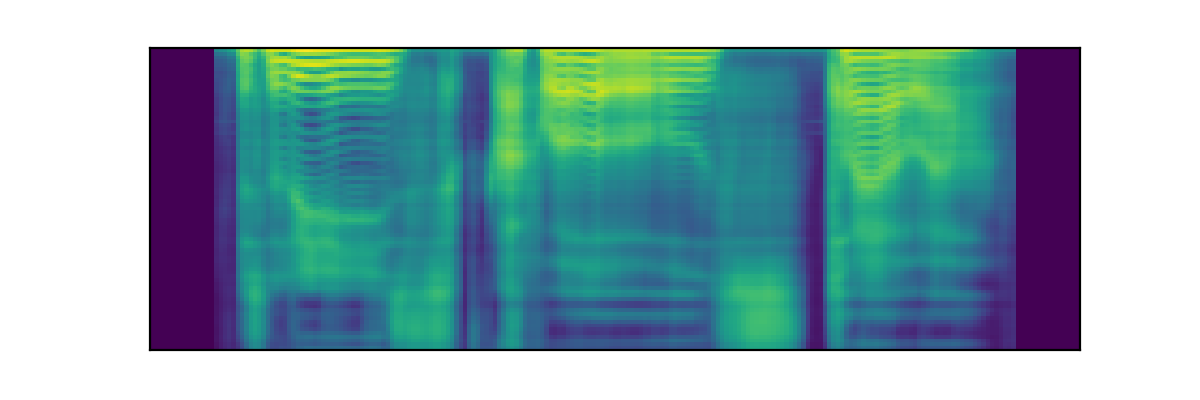}
		\caption{$\tau=1.06$}\label{subfig:1.06}
	\end{subfigure}
	\begin{subfigure}{\columnwidth}
		\centering
		\includegraphics[width=0.9\linewidth]{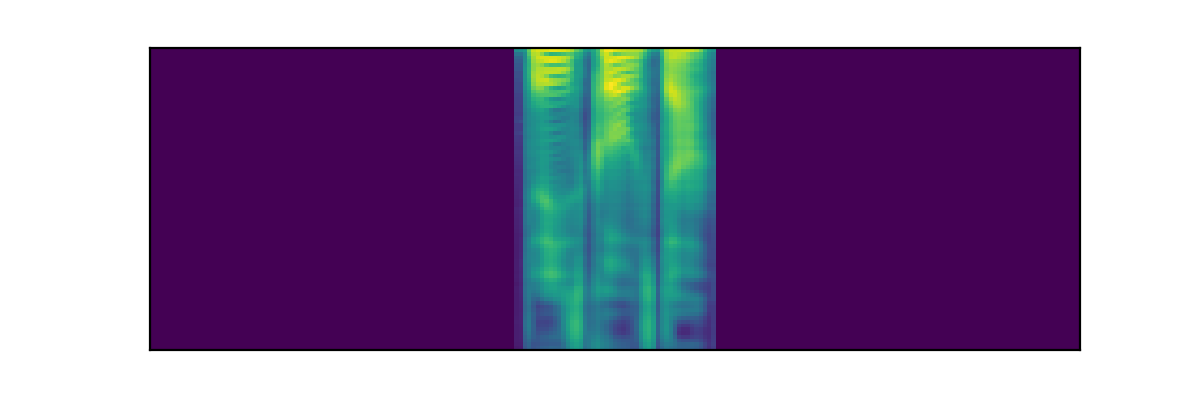}
		\caption{$\tau=0.92$}\label{subfig:0.92}
	\end{subfigure}
	\begin{subfigure}{\columnwidth}
		\centering
		\includegraphics[width=0.9\linewidth]{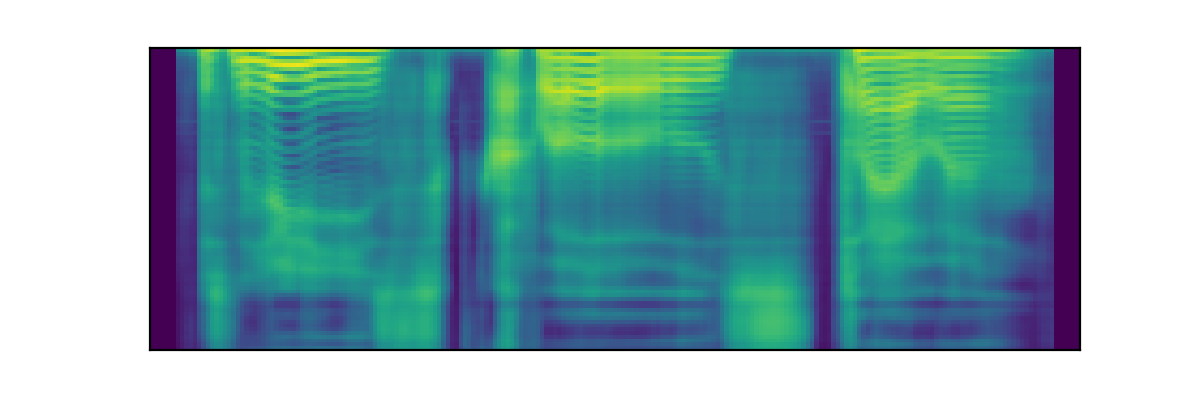}
		\caption{$\tau=1.08$}\label{subfig:1.08}
	\end{subfigure}
	\begin{subfigure}{\columnwidth}
		\centering
		\includegraphics[width=0.9\linewidth]{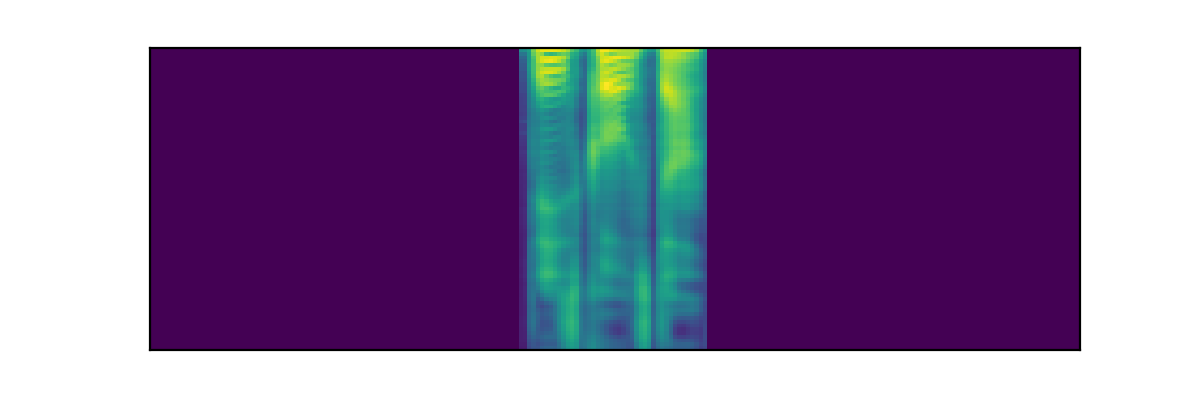}
		\caption{$\tau=0.90$}\label{subfig:0.9}
	\end{subfigure}
	\begin{subfigure}{\columnwidth}
		\centering
		\includegraphics[width=0.9\linewidth]{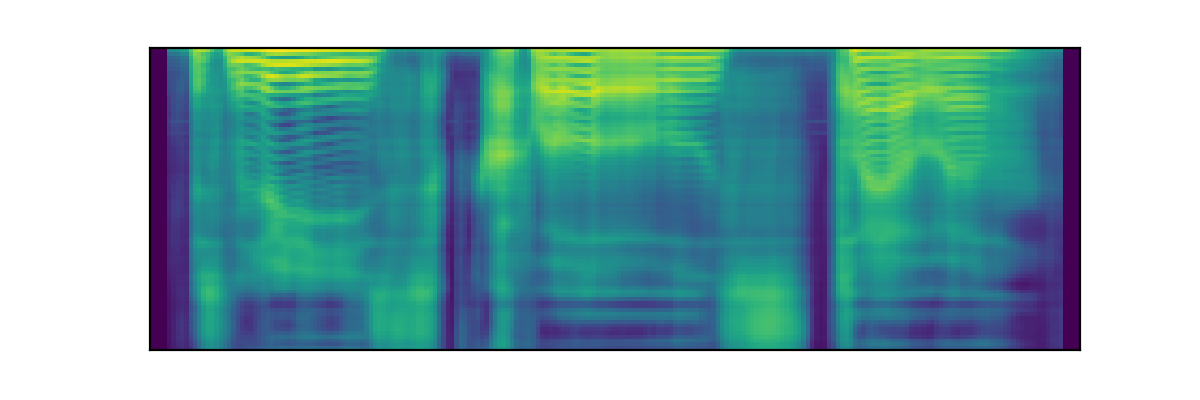}
		\caption{$\tau=1.10$}\label{subfig:1.1}
	\end{subfigure}
	\caption{Temporally-aligned spectrograms constructed from resampled hidden representation as a visualization of the resampling operations. The utterance is \emph{Please call Stella}.}
	\label{fig:resample_vis}
\end{figure*}




\end{document}